\newcounter{qcounter}
\newtheorem*{theorem*}{Theorem}
\newcommand{\papscore}{paper score\xspace}
\newcommand{\papscores}{paper scores\xspace}
\newcommand{\iralg}{\textsc{FairIr}\xspace}
\newcommand{\flowalg}{\textsc{FairFlow}\xspace}
\newcommand{\covcstr}{\ensuremath{C}\xspace}
\newcommand{\loadub}{\ensuremath{U}\xspace}
\newcommand{\loadlb}{\ensuremath{L}\xspace}
\newcommand{\tpms}{\textsc{TPMS RAP}\xspace}
\newcommand{\prfa}{\textsc{PR4A}\xspace}
\newcommand{\midl}{\textsc{MIDL}\xspace}
\newcommand{\cvpra}{\textsc{CVPR}\xspace}
\newcommand{\cvprb}{\textsc{CVPR2018}\xspace}
\begin{document}
\title{Paper Matching with Local Fairness Constraints}

\author{Ari Kobren, Barna Saha, Andrew McCallum}
\affil{College of Information and Computer Sciences \\ University of Massachusetts Amherst}
\affil{\{akobren,barna,mccallum\}@cs.umass.edu}





\maketitle

\begin{abstract}
Automatically matching reviewers to papers is a crucial step of the peer review process for venues receiving thousands of submissions. Unfortunately,  common paper matching algorithms often construct matchings suffering from two critical problems: (1) the group of reviewers assigned to a paper do not collectively possess sufficient expertise, and (2) reviewer workloads are highly skewed. In this paper, we propose a novel \emph{local fairness formulation} of
paper matching that directly addresses both of these issues. Since optimizing
our formulation is not always tractable, we introduce two new algorithms, \iralg\ and \flowalg, for computing fair matchings that approximately optimize the new formulation. \iralg solves a relaxation of the local fairness formulation and then employs a rounding technique to construct a valid matching that provably maximizes the objective and only compromises on fairness with respect to reviewer loads and papers by a small constant. In contrast, \flowalg is not provably guaranteed to produce fair matchings, however it can be 2x as efficient as \iralg and an order of magnitude faster than matching algorithms that directly optimize for fairness.
Empirically, we demonstrate that both \iralg and \flowalg improve fairness over standard matching algorithms on real conference data. Moreover, in comparison to state-of-the-art matching algorithms that optimize for fairness only, \iralg achieves higher objective scores, \flowalg achieves competitive fairness, and both are capable of more evenly allocating reviewers.
\end{abstract}

\section{Introduction}
In 2014, the program chairs (PCs) of the Neural Information Processing
Systems (NeurIPS) conference conducted an experiment that allowed them to
measure the inherent randomness in the conference's peer review
procedure. In their experiment, 10\% of the
submitted papers were assigned to \emph{two} disjoint sets of
reviewers instead of one. For the papers in this experimental set, the
PCs found that the two groups assigned to review the same paper
disagreed about whether to accept or reject the paper 25.9\% of the
time. Accordingly, if all  2014 NeurIPS submissions were reviewed again by a new set of reviewers, about 57\% of the originally accepted papers would be rejected
\cite{price2014nips}.

The NIPS experiment is only one of many studies highlighting the poor reliability of the peer reviewing process. For example,
another study finds that the rate of agreement between reviewers for a
clinical neuroscience journal is not significantly different from
chance~\cite{rothwell2000reproducibility}. This is particularly troublesome given that decisions regarding patient care, expensive scientific exploration, researcher hiring, funding, tenure, etc. are all based, in part, on
published scientific work and thus on the peer reviewing process.

Unsurprisingly, previous work shows that experts are able to produce
higher quality reviews of submitted publications than non-experts.
Experts are often able to develop more ``discerning'' opinions about
the proposals under review \cite{johnson1982multimethod,
  camerer199710} and some
researchers in cognitive science and artificial intelligence claim
that experts can make more accurate decisions than non-experts about
uncertain information~\cite{johnson1988expertise}. Clearly 
 peer review outcomes are likely to be of higher quality if each paper
were reviewed exclusively by experts in the paper's topical areas. Unfortunately, since experts are relatively scarce, this is often impossible. Especially for many computer science venues, which are faced with increasingly large volumes of submissions, assigning only experts to each submission is impossible given
typical reviewer load restrictions. Further exacerbating the problem, conference decision processes are dictated by a strict timeline. This necessitates significant automation in matching reviewers to submitted papers, highly limiting the extent to which humans can significantly intervene.

Automated systems often cast the paper matching problem as a global maximization of reviewer-paper \emph{affinity}. In particular, each reviewer-paper pair has an associated affinity score, which is typically computed from a variety of factors, such as: expertise, area chair recommendations, reviewer bids, subject area matches, etc. The optimal matching is one that maximizes the sum of affinities of assigned reviewer-paper pairs, subject to \emph{load} and \emph{coverage} constraints, which bound the number of papers to which a reviewer can be assigned and dictate the number of reviews each paper must receive, respectively~\cite{charlin2012framework, TaylorTR08}. While optimizing the global objective has merit, a major disadvantage of the approach is that it can lead to matchings that contain papers assigned to a set reviewers who lack expertise in that paper's topical areas~\cite{garg2010assigning, stelmakh2018peerreview4all}. This is because in constructing a matching that maximizes the global objective, allocating more experts to one paper at the expense of another may improve the objective score. In order to be fair, it is important to ensure that each paper is assigned to a group of reviewers who instead possess a minimum acceptable level of expertise.

Recent work has attempted to overcome these problems by either (a) introducing strict requirements on the minimum affinity of valid paper-reviewer matches, or (b) optimizing the sum of affinities of the one paper that is worst-off~\cite{garg2010assigning, stelmakh2018peerreview4all}. However, restricting the minimum allowable affinity often renders the problem infeasible as there may not exist any matching that provides sufficient coverage to all papers subject to the threshold. Previously proposed algorithms that maximize the sum affinities for the worst-off paper do result in matchings that are more fair, but they also suffer from two disadvantages: (1) they do not simultaneously optimize for the overall best assignment (measured by sum total affinity), and (2) they are agnostic to lower limits on reviewer loads (which are common in practice) and thus may produce matchings in which reviewers are assigned to dramatically different numbers of papers.

To address these issues, we introduce the \emph{local fairness formulation} of the paper matching problem.  Our novel formulation is cast as an integer linear program that (1) optimizes the global objective, (2) includes both upper \emph{and lower} bound constraints that serve to balance the reviewing load among reviewers, and (3) includes \emph{local fairness constraints}, which ensure that each paper is assigned to a set of reviewers that collectively possess sufficient expertise. 

The local fairness formulation is NP-Hard. To address this hardness, we present \iralg, the \textbf{FAIR} matching via \textbf{I}terative \textbf{R}elaxtion algorithm that jointly optimizes the global objective, obeys local fairness constraints, and satisfies lower (and upper) bounds on reviewer loads to ensure more balanced allocation. \iralg works by solving a relaxation of the local fairness formulation and rounding the corresponding fractional solution using a specially designed procedure. Theoretically, we prove that matchings constructed by \iralg may only violate the local fairness and load constraints by a small margin while maximizing the global objective. In experiments with data from real conferences, we show that, despite theoretical possibility of constraint violations, \iralg never violates reviewer load constraints. The experiments also reveal that matchings computed by \iralg exhibit higher objective scores, more balanced allocations of reviewers and competitive treatment of the most disadvantaged paper when compared to state-of-the-art approaches that optimize for fairness.

In real-conference settings, a program chair may desire to construct and explore many alternative matchings with various inputs, which demands an efficient fair matching algorithm.  Toward this end, we present \flowalg, a min-cost-flow-based heuristic for constructing fair matchings that is faster than \iralg by more than 2x. While matchings constructed by \flowalg are not guaranteed to adhere to a specific degree of fairness (like \iralg or previous work), in experiments, \flowalg often constructs matchings exhibiting fairness and objective scores close to that of \iralg in a fraction of the time. Unlike \iralg and matching algorithms that rely on linear programming, \flowalg operates by first maximizing the global objective and then refining the corresponding solution through a series of min-cost-flow problems in which reviewers are reassigned from the most advantaged papers to the most disadvantaged papers.

This paper is organized as follows. Section~\ref{sec:matching} presents the standard paper matching formulation that optimizes the global objective. Section~\ref{sec:local} covers our main contribution by providing the local fairness formulation of paper matching and describes \iralg and its formal guarantees. Section~\ref{sec:flow} presents the more efficient \flowalg heuristic. In Section~\ref{sec:exp}, we experimentally show the effectiveness of our approach over other approaches on several datasets coming from real conferences. 
\section{Reviewer Assignment Problem}
\label{sec:matching}
Popular academic conferences typically receive thousands of paper submissions.
Immediately after the submission period closes, papers are automatically
matched to a similarly sized pool of reviewers.  
A \emph{matching} of reviewers to papers is constructed using real-valued reviewer-paper \emph{affinities}. 
The affinity between a reviewer and a paper may be computed from a variety of factors, such as: expertise, bids, area chair recommendations, subject area matches, etc. 
Previous work has explored approaches for modeling reviewer-paper affinity via latent semantic indexing, collaborative filtering or information retrieval techniques~\cite{dumais1992automating, charlin2012framework, conry2009recommender}. 
We do not develop affinity models in this work. 
Instead, we focus on algorithms for matching papers to reviewers given the affinity scores. 
In the literature, this matching problem is known by many names; we choose \emph{the reviewer assignment problem} (RAP)~\cite{kou2015weighted, stelmakh2018peerreview4all}.

The RAP is often accompanied by a two types of constraints:
\emph{load constraints} and \emph{coverage constraints}
\cite{garg2010assigning}. 
A load constraint bounds the number of papers assigned to a reviewer; a coverage constraint defines the number of reviews a paper must receive.
Typically, all papers must be reviewed the same number of times. Reviewers do not always have equal loads, although a highly uneven load is inherently unfair and may lead to reviewers declining to review or not submitting reviews on time.

Formally, let $R=\{r_i\}^{N}_{i=1}$ be the set of reviewers, $P=\{p_j\}^{M}_{j=1}$ be the set of papers and $\mathbf{A} \in \mathbb{R}^{|R|\times |P|}$ be a matrix of reviewer-paper affinities.
The RAP can be written as the following integer program:

\begin{align*}
  \max\quad& \sum_{i=1}^{|R|}\sum_{j=1}^{|P|}
  x_{ij}\mathbf{A}_{ij}&\\
  \text{subject }\text{to }\quad & \sum_{j=1}^{|P|}x_{ij}
  \le \loadub_i ,\ &\forall i=1,2,...,|R|\\
  &\sum_{i=1}^{|\Rcal|}x_{ij}
  = \covcstr_j ,\ &\forall j=1,2,...,|P|\\
  &x_{ij} \in \{0,1\},\ & \forall i,j \label{eq:integrality}.
\end{align*}
Here, $\{\loadub_i\}_{i=1}^{|R|}$ is the set of upper bounds on reviewer loads, and $\{\covcstr_j\}_{j=1}^{|P|}$ represents the coverage constraints. 
The matching of reviewers to papers is encoded in the
variables $x_{ij}$, where $x_{ij} = 1$ indicates that reviewer $r_i$
has been assigned to paper $p_j$. 
In this formulation, the objective is to maximize the sum of affinities of reviewer-paper assignments (subject to the constraints); it can be solved optimally in polynomial time with standard tools~\cite{TaylorTR08}.

In practice, lower bounds on reviewer loads are often invoked in order to spread the reviewing load more equally across reviewers.
The formulation above can be augmented to include the lower bounds by adding the following constraints: 
\begin{align*}
    \sum_{j=1}^{|P|}x_{ij} \ge \loadlb_i ,\ &\forall i=1,2,...,|R|,
\end{align*}
where $\{\loadlb_i\}_{i=1}^{|R|}$ is the set of lower bounds on reviewer loads.
The resulting problem is still efficiently solvable.
Note that the formulation above, with and without lower bounds, is currently employed by various conferences and conference management software, for example: TPMS, OpenReview, CMT and HotCRP~\cite{charlin2013toronto, stelmakh2018peerreview4all}.
We will henceforth refer to the above two formulations as the \tpms, where the inclusion of lower bounds will be clear from context.
\section{Fair Paper Matching}
\label{sec:local}
It is well-known that optimizing the \tpms can result in unfair matchings~\cite{garg2010assigning, stelmakh2018peerreview4all}.
To see why, consider the example RAP in Figure~\ref{fig:assign-ex}, in which there are 4 papers and 4 reviewers, and define the \emph{\papscore} for paper $p$ to be the sum of affinities of reviewers assigned to paper $p$.
In the example, each paper must be assigned 2 reviewers and each reviewer may only be assigned up to 2 papers.
Even though the matchings in Figures~\ref{fig:unfair-assign} and \ref{fig:fair-assign} obtain equivalent
objective scores under the \tpms, the matching in Figure
\ref{fig:unfair-assign} causes papers $P3$ and $P4$ to have much lower \papscores than papers $P1$ and $P2$.
In practice, this may indicate that $P3$ and $P4$ have been assigned to a collection of reviewers, none of whom are well-suited to provide an expert evaluation.
The assignment in Figure \ref{fig:fair-assign} is clearly more equitable with respect to the papers (and reviewers), but the 
\tpms  does not prefer this matching since it seeks to globally optimize affinity.

\begin{figure}[t!]
\centering
    \begin{subfigure}[b]{0.23\textwidth}
        \centering
        \includegraphics[width=0.85\textwidth]{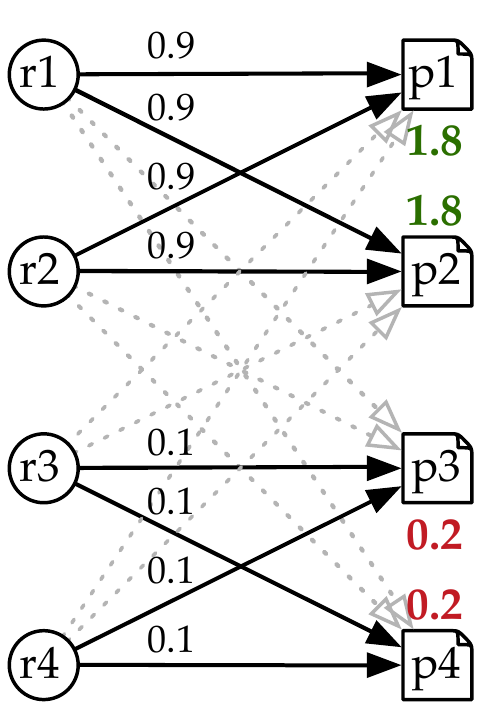}
        \caption{Unfair Matching.}
        \label{fig:unfair-assign}
    \end{subfigure}%
    \begin{subfigure}[b]{0.23\textwidth}
        \centering
        \includegraphics[width=0.85\textwidth]{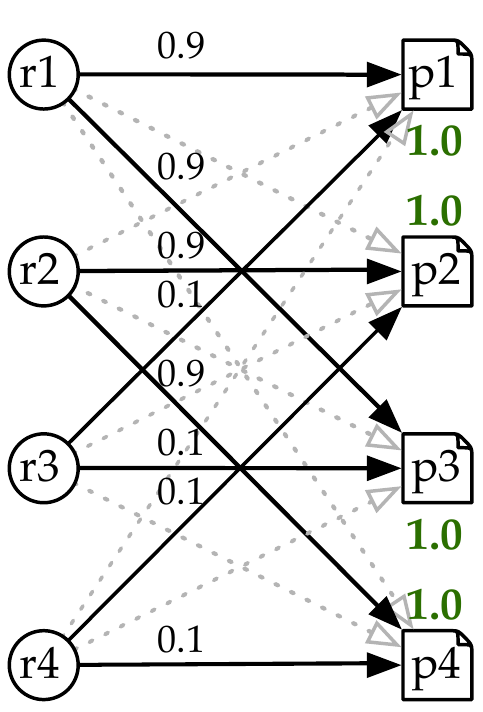}
        \caption{Fair Matching.}
        \label{fig:fair-assign}
    \end{subfigure}\\
    \caption{\textbf{Fair \& Unfair Matchings}. 4 papers and 4 reviewers with $\loadub=\loadlb=\covcstr=2$. $r1$ and $r2$ have affinity $0.9$ with all papers; $r3$ and $r4$ have affinity $0.1$ with all papers. Solid lines indicate assignments, the bold number adjacent to each paper corresponds to its \papscore. Matchings in both Figures \ref{fig:unfair-assign} and
      \ref{fig:fair-assign} achieve equivalent objective scores, but in Figure \ref{fig:unfair-assign} $p3$ and $p4$ are only assigned to reviewers with low affinity.}
\label{fig:assign-ex}
\end{figure}

\subsection{Local Fairness Constraints}
\label{sec:local fairness}
We propose to prohibit such undesirable matchings by augmenting the \tpms with \emph{local fairness constraints}.
That is, we constrain the \papscore at each paper to be no less than $T$~\cite{vazirani2013approximation}.
Formally, 
\begin{align*}
  \sum_{i=1}^{|R|}x_{ij}A_{ij} \ge T ,\ &\forall j=1,2,...,|P|.\\
\end{align*}
We refer to the resulting RAP formulation as the \emph{local fairness formulation}. 
While adding local fairness constraints is simple, this formulation is NP-Hard since it generalizes the max-min fair allocation problem \cite{vazirani2013approximation}.
To avoid the hardness of the local fairness formulation, one might instead be tempted to constrain the minimum affinity of valid assignments of reviewers to papers.
However, doing so often results in infeasible assignment problems~\cite{wang2008survey}.

\subsection{\iralg}
We present \iralg, an approximation algorithm for solving the local fairness formulation.
The algorithm is capable of accepting both lower and upper bound constraints on reviewer loads (as well as coverage constraints). 
By nature of being approximate, \iralg is guaranteed to return a matching in which any local fairness constraint may be violated by at most $A_{max} = \max_{r \in R, p \in P}A_{rp}$---the highest reviewer-paper affinity, and any reviewer load constraint (upper or lower bound) is violated by at most 1. Moreover, it achieves an $1$-approximation (no violation) in the global objective.
We call attention to the fact that our guarantees hold even though \iralg is able to accommodate constraints on reviewer lower bounds while optimizing a global objective, unlike most state-of-the-art paper matching algorithms with theoretical guarantees~\cite{garg2010assigning,stelmakh2018peerreview4all}.
Note that in practice lower bounds are often an input to the RAP in order to spread the reviewing load more equally across reviewers.

\begin{algorithm}[t]
  \caption{\iralg$(\Pcal)$}
  \label{alg:iralg}
  \begin{algorithmic}
    \STATE $\Pcal' \leftarrow \texttt{relax}(\Pcal)$
    \STATE $X \gets \{x_{ij} | i \in [|R|], j \in [|P|]\}$
    \WHILE{$X$ \texttt{is not empty}}
        \STATE $s \gets $\texttt{maximize}$(\Pcal')$
        \FOR{$x_{ij} \in X$}
        \IF{$x_{ij} == 0$}
            \STATE \texttt{fix} $x_{ij} = 0; \quad X \gets X \backslash x_{ij}$
        \ENDIF
        \IF{$x_{ij} == 1$}
            \STATE \texttt{fix} $x_{ij} = 1; \quad X \gets X \backslash x_{ij}$
        \ENDIF
        \ENDFOR
    \IF{$p \in P$ \texttt{has at most $3$ fractional assignments}}
        \STATE $\texttt{drop-fairness-constraint}(p)$
    \ENDIF
    \IF{\texttt{no fairness constraints dropped}}
        \IF{$r \in R$ \texttt{has at most $2$ fractional assignments}}
            \STATE $\texttt{drop-loads}(r)$
        \ENDIF
    \ENDIF
    \ENDWHILE
  \end{algorithmic}
\end{algorithm}

Our algorithm proceeds in rounds.
In each round, \iralg relaxes the integrality constraints of the local fairness formulation (i.e., each $x_{ij}$ can take
any value in the range $[0,1]$) and solves the resulting linear program. 
Any $x_{ij}$ with an integral assignment (i.e., either $0$ or $1$) is constrained to retain that value in subsequent rounds. 
Among the $x_{ij}$s with non-integral values, \iralg looks for a paper such that at most 3 reviewers have been fractionally assigned to it (the paper may have any number of integrally assigned reviewers). 
If such a paper is found, \iralg drops the corresponding local fairness constraint. 
If no such paper is found, \iralg finds a reviewer with at most 2 papers fractionally assigned to it and drops
the corresponding load constraints. 
The next round proceeds with the modified program. 
As soon as a matching is found that contains only integral assignments, that matching is returned. 
Algorithm \ref{alg:iralg} contains pseudocode for \iralg.

\begin{theorem}
  Given a feasible instance of the local fairness formulation
  $\Pcal = <R, P, \loadlb, \loadub, \covcstr,A,T>$, \iralg always terminates and returns
  an integer solution in which each local fairness constraint may be
  violated by at most $A_{max}$, each load constraint may be
  violated by at most 1 and the global objective is maximized.
  \label{thm:imma}
\end{theorem}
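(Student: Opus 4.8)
The plan is to run the standard iterative-relaxation analysis, treating each call \texttt{maximize}$(\Pcal')$ as returning an optimal \emph{vertex} (basic optimal solution) of the current linear program and tracking that vertex across rounds. From it I would establish three things: the \texttt{while} loop always makes progress and therefore halts with $X=\emptyset$ (so the output is integral); the objective value never decreases from round to round; and a constraint of $\Pcal$ is violated by the returned matching only if it was explicitly dropped, in which case the lost slack is small.

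\emph{Termination --- the main obstacle.} I would first show that in every round one of three things happens: an unfixed variable is integral in the vertex and gets fixed, a paper's fairness constraint is dropped, or a reviewer's load constraints are dropped. Since each event can occur at most once, the loop terminates, and by its guard it exits only with $X=\emptyset$. To prove progress, suppose for contradiction that in some round the optimal vertex $x^\star$ has all $m=|X|\ge 1$ unfixed coordinates strictly in $(0,1)$, every surviving fairness constraint touches $\ge 4$ fractional reviewers (so none is dropped and the algorithm reaches the load test), and every surviving load constraint touches $\ge 3$ fractional papers. Let $P_f,R_f$ be the papers and reviewers incident to a fractional variable, and $P_f^{\mathrm{act}}\subseteq P_f$, $R_f^{\mathrm{act}}\subseteq R_f$ those whose fairness / load constraints survive. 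A vertex in $m$-dimensional space is pinned by $m$ linearly independent tight constraints, and these can only be coverage equalities ($\le|P_f|$), load inequalities ($\le|R_f^{\mathrm{act}}|$, since a reviewer's lower and upper bounds are not both tight unless they coincide), or fairness inequalities ($\le|P_f^{\mathrm{act}}|$) --- no box constraint is tight; hence $m\le|P_f|+|R_f^{\mathrm{act}}|+|P_f^{\mathrm{act}}|$. Double-counting fractional incidences gives $2m=\sum_{j\in P_f}(\#\text{frac.\ reviewers of }j)+\sum_{i\in R_f}(\#\text{frac.\ papers of }i)$, and each paper of $P_f$ has $\ge 2$ fractional reviewers (a lone one would be forced integral by its coverage equality), each of $P_f^{\mathrm{act}}$ has $\ge 4$, each reviewer of $R_f$ has $\ge 1$ and each of $R_f^{\mathrm{act}}$ has $\ge 3$, so $2m\ge 2|P_f|+2|P_f^{\mathrm{act}}|+|R_f|+2|R_f^{\mathrm{act}}|$. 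Comparing the two bounds forces $|R_f|\le 0$, hence $m=0$, a contradiction. This counting is the delicate step; the thresholds $3$ and $2$ hard-wired into \iralg are exactly what make the two inequalities meet.

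\emph{Objective.} This I expect to be routine. The first round solves the relaxation of $\Pcal$, so its value is at least the optimum of $\Pcal$; each later round only fixes variables at the values they already hold in the current optimal solution and removes constraints, leaving that solution feasible, so the optimum is non-decreasing. The returned matching is optimal for the final LP, hence achieves objective value at least the optimum of $\Pcal$.

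\emph{Constraint violations.} A coverage constraint is never dropped, so it holds exactly in the output. When paper $p$'s fairness constraint is dropped, $p$ then has a set $S$ of at most $3$ fractional reviewers with values $y_s\in(0,1)$ and current paper score $\ge T$; since coverage survives, the number of reviewers of $S$ ultimately rounded to $1$ equals the integer $c=\sum_{s\in S}y_s\in\{1,2\}$, so the net change in $p$'s score to termination is $\sum_{s\in S_1}(1-y_s)A_{sp}-\sum_{s\in S_0}y_s A_{sp}$, with $S_1,S_0$ the reviewers rounded up / down. Both weight vectors have equal total mass $W=\sum_{s\in S_0}y_s$, and a short case check over $|S|\le 3$ and $c\in\{1,2\}$ shows $W<1$; with affinities in $[0,A_{max}]$ this bounds the change below by $-WA_{max}>-A_{max}$, so the final score exceeds $T-A_{max}$. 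Likewise, when reviewer $r$'s load constraints are dropped it then has at most $2$ fractional papers, so its feasible fractional load (which lies in $[\loadlb_r,\loadub_r]$) and its final integral load differ by strictly less than $2$; since the already-fixed part is integral and each fractional value is strictly inside $(0,1)$, the final load lands in $[\loadlb_r-1,\loadub_r+1]$. Hence every surviving constraint is met and every dropped one is violated by at most $A_{max}$ (fairness) or $1$ (load), which with termination and the objective bound gives the theorem.
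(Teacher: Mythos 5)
Your proof is correct and follows the same overall strategy as the paper's: termination via counting fractional variables against linearly independent tight constraints at a basic optimal solution, monotonicity of the LP optimum under constraint dropping for the objective guarantee, and a local analysis at the moment a constraint is dropped for the violation bounds. Two of your steps are, however, sharper than the paper's and worth noting. For termination, the paper counts globally: it assumes every reviewer has at least $3$ and every paper at least $4$ fractional assignments and compares $\tfrac{1}{2}(3|R|+4|P|)$ against $|R|+2|P|$ active constraints, which silently ignores reviewers and papers with no fractional incidences and measures against the constraints of the original rather than the current reduced program. Your count, restricted to the fractional support $P_f,R_f$ in the $m$-dimensional space of unfixed variables, with the active/inactive distinction for dropped constraints, closes these gaps and yields the clean conclusion $|R_f|\le 0$. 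For the fairness violation, the paper runs a case analysis over $1$, $2$ and $3$ fractional reviewers; your observation that the rounded-up and rounded-down parts carry equal fractional mass $W<1$ (forced by the surviving coverage equality) gives the $A_{max}$ bound in one stroke and would extend to any number of fractional reviewers. One caveat shared by both arguments: the fairness bound needs affinities in $[0,A_{max}]$ (otherwise the retained reviewers' contribution cannot be lower-bounded by zero); you state this hypothesis explicitly, whereas the paper leaves it implicit even though one of its datasets contains negative affinities.
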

\noindent The proof of Theorem \ref{thm:imma} is found in the appendix. 

Theorem \ref{thm:imma} requires that the instance of the local fairness formulation be feasible. A RAP instance may be \emph{infeasible} if $T$ is too large, or if $\sum_{i=0}^{|R|}\loadub_i < \sum_{j=0}^{|P|}\covcstr_j$.
Checking the second condition is trivial.
To check if $T$ is too large, simply check if the corresponding relaxed local fairness formulation is infeasible. By Algorithm \ref{alg:iralg}, if the relaxed program is feasible, then \iralg must return an integer solution for that instance. Formally,

\begin{fact}
\label{fact:frac}
    If an instance of the local fairness formulation, $\Pcal$, is feasible after the integrality constraints on $x_{ij}$s have been removed, then Algorithm \ref{alg:iralg} returns an integral (possibly approximate) solution.
\end{fact}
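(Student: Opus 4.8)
The plan is to recognize that this statement is the ``termination $+$ integrality'' half of Theorem~\ref{thm:imma}, and to isolate exactly the part of that argument that uses only feasibility of the LP relaxation of $\Pcal$ rather than integral feasibility. I would organize the argument around three claims.

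\emph{(1) Every invocation of \texttt{maximize} is on a nonempty bounded polytope, so it can be taken to return a basic (vertex) optimum.} In the first round the linear program is precisely the relaxation of $\Pcal$, feasible by hypothesis, and since every $x_{ij}\in[0,1]$ the feasible region is bounded, so an optimum is attained at a vertex. Moreover the two operations \iralg performs between rounds --- fixing a coordinate to the (integral) value it already takes, and deleting a constraint --- each preserve feasibility and boundedness, because the previous round's solution still satisfies all remaining constraints and all fixed values. Hence \texttt{maximize} never fails, no matter how many rounds occur.

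\emph{(2) Every round that does not halt the loop deletes a fairness or a load constraint.} If the vertex $x^{*}$ is integral on all still-free variables, the inner loop empties $X$ and the algorithm stops; so assume $x^{*}$ has a fractional free variable, and let $E=\{(i,j):x^{*}_{ij}\in(0,1)\}$, with $P'$, $R'$ the papers and reviewers touched by $E$, split as $P'=P'_{f}\cup P'_{d}$ (fairness constraint still present / already dropped) and $R'=R'_{l}\cup R'_{d}$ (load constraints still present / already dropped). Because $x^{*}$ is a vertex, after pinning the integral coordinates the tight structural constraints whose support meets $E$ have rank exactly $|E|$, so $|E|\le |P'|+|P'_{f}|+|R'_{l}|=2|P'_{f}|+|P'_{d}|+|R'_{l}|$ (each paper in $P'$ contributes its always-tight coverage equality, each paper in $P'_{f}$ at most one fairness row, each reviewer in $R'_{l}$ at most one load row). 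For the reverse direction I would use two structural facts: an integral coverage equality forces every paper in $P'$ to have degree at least $2$ in $E$; and, since the set of free variables only shrinks, a paper whose fairness constraint was dropped in an earlier round keeps at most $3$ fractional reviewers thereafter, and a reviewer whose loads were dropped keeps at most $2$ fractional papers. Now if no constraint were deleted this round, then every paper in $P'_{f}$ would have degree $\ge 4$, every reviewer in $R'_{l}$ degree $\ge 3$, every paper in $P'_{d}$ degree $\ge 2$, and every reviewer in $R'_{d}$ degree $\ge 1$; double-counting $|E|$ from the paper and reviewer sides gives $2|E|\ge(4|P'_{f}|+2|P'_{d}|)+(3|R'_{l}|+|R'_{d}|)$, which together with $2|E|\le 4|P'_{f}|+2|P'_{d}|+2|R'_{l}|$ forces $|R'_{l}|=|R'_{d}|=0$, hence $E=\emptyset$ --- contradicting that $x^{*}$ is fractional.

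\emph{(3) Termination and integrality.} Let $\Phi$ equal the number of still-free variables plus the number of still-present fairness constraints plus the number of still-present load-constraint pairs; $\Phi$ is a nonnegative integer at most $|R||P|+|P|+|R|$. By (2), every iteration that does not halt the loop fixes a variable or deletes a constraint, so $\Phi$ strictly decreases; thus the loop runs finitely often, and it exits only when $X=\emptyset$, i.e.\ when every $x_{ij}$ has been fixed to $0$ or $1$ --- so the returned matching is integral. The step I expect to be the main obstacle is (2): setting up the vertex/rank inequality together with the edge double-count so that constraints deleted in earlier rounds are charged correctly. The clean device is the persistence observation above (a dropped-fairness paper stays below $4$ fractional reviewers, a dropped-load reviewer below $3$ fractional papers), which is exactly what makes $P'_{d}$ and $R'_{d}$ ``cheap'' enough in the count to close the contradiction; without it the bound $|E|\le 2|P'_{f}|+|P'_{d}|+|R'_{l}|$ is too weak.
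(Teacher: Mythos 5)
Your proposal is correct and follows essentially the same route as the paper: the paper also reduces Fact~\ref{fact:frac} to the termination argument for Theorem~\ref{thm:imma}, invoking the basic-feasible-solution rank bound (its Corollary~\ref{cor:active-constraints}) and deriving a contradiction by counting fractional variables against tight constraints when every surviving paper has $\ge 4$ and every surviving reviewer $\ge 3$ fractional assignments. Your version of the count is localized to the support of the fractional variables and explicitly charges papers and reviewers whose constraints were dropped in earlier rounds (via the persistence observation), which makes the bookkeeping tighter than the paper's global tally of $\tfrac{1}{2}[3|R|+4|P|]$ versus $|R|+2|P|$, but the key lemma and overall decomposition are the same.
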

\noindent Thus, by Fact \ref{fact:frac}, testing whether or not \iralg will return an integer solution for an instance of the local fairness formulation requires solving the relaxed program. In practice, a binary search over the feasible range of $T$ is performed and the highest $T$ yielding a feasible program is selected. Such a binary search requires solving the relaxed formulation several times and can add to the computational complexity. Overall, the running time of the algorithm is dominated by the number of times the linear program solver is invoked. Note that during each iteration of \iralg, many constraints may be dropped, which helps to improve scalability without sacrificing the theoretical guarantees. Also, note that by dropping constraints during each iteration the objective score can only increase.
\section{Faster Flow-based Matching}
\label{sec:flow}
\begin{figure*}
\begin{subfigure}[h]{0.4\columnwidth}
  \centerline{\includegraphics[width=0.75\columnwidth]{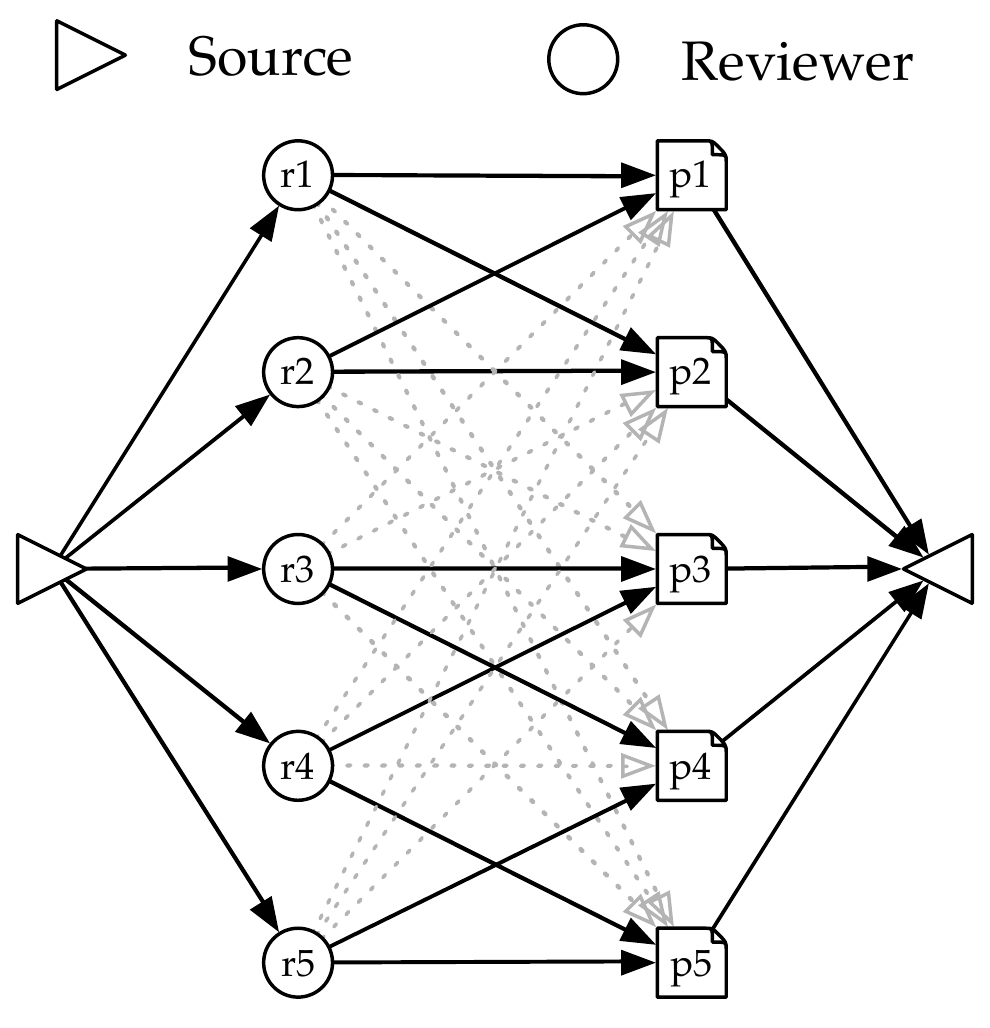}}
  \caption{RAP as flow.}
  \label{fig:rapflow}
\end{subfigure}
\begin{subfigure}[h]{0.4\columnwidth}
  \centerline{\includegraphics[width=0.75\columnwidth]{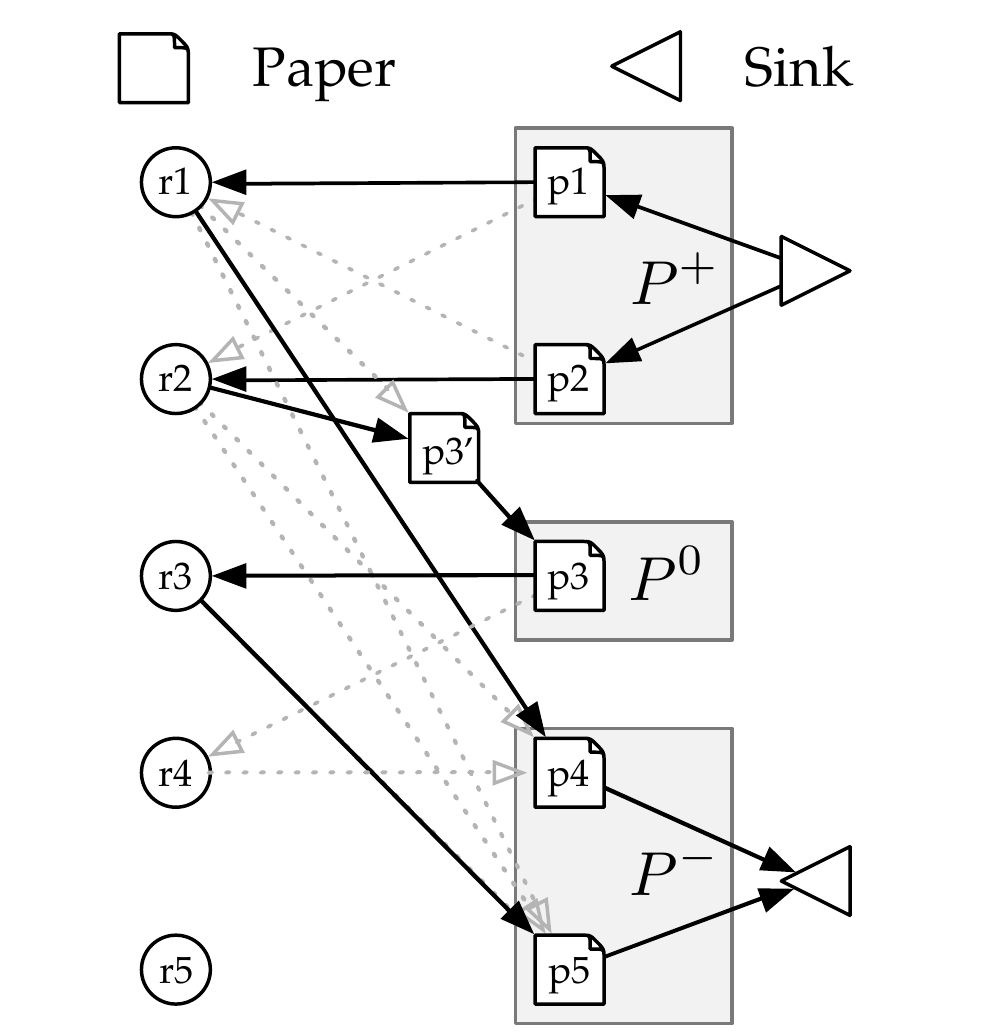}}
  \caption{Refinement Network.}
  \label{fig:refinenet}
\end{subfigure}
\caption{\textbf{\flowalg}. Darker bold arrows indicate utilized edges, lighter dotted arrows indicate edges that are not utilized. A darker edge originating at a reviewer $r$ and ending at a paper $p$ corresponds to the assignment of $r$ to $p$. A darker edge originating at a paper $p$ and ending at a reviewer $r$ corresponds to the \emph{"unassignment"} of reviewer $r$ from $p$. In the first step of \flowalg (Figure \ref{fig:rapflow}), MCF is solved in $\Gcal$. Papers are then grouped into $P^+$, $P^0$, and $P^-$ and the refinement network $\Gcal'$ is constructed. Edges are constructed to route flow from $P^+$ to $P^-$. Notice dummy node, $p3'$, which serves to limit the flow to $p3$.}
\label{fig:flowfigs}
\end{figure*}

For real conferences, paper matching is an interactive process.
A PC may construct one matching, and upon inspection, decide to tune the affinity matrix, $A$ and compute a new matching.
Alternatively, a PC may browse a matching and decide that certain reviewers should not be assigned to certain papers, or, that certain reviewers \emph{must} review certain papers.
After imposing the additional constraints, ideally, a new matching could be constructed efficiently.

\iralg is founded on solving a sequence of linear programs, and thus may not be efficient enough to support this kind of interactive paper matching when the number of papers and reviewers is large.
Other similar algorithms, which consider local constraints, also may not be efficient enough because they too rely on linear programming solvers \cite{garg2010assigning,stelmakh2018peerreview4all}.
Therefore, we introduce a min-cost flow-based heuristic for solving the local fairness formulation that is significantly faster than other state-of-the-art approaches.
While our flow-based approach does not enjoy the same performance guarantees of \iralg, empirically, we observe that it constructs high quality matches on real data (Section \ref{sec:exp}).

\subsection{Paper Matching as Min-cost Flow}
\label{subsec:mcf}
We begin by describing how to solve the \tpms using algorithms for \emph{min-cost flow} (MCF).
Our first focus is on RAP instances without constraints on reviewer load lower bounds.
Then we describe briefly how load lower bounds can be incorporated.

Construct the following graph, $\Gcal$, in which each edge has
both an integer cost and capacity:
\begin{enumerate}
\item create a source node $s$ with supply equal sum over papers of the corresponding coverage constraint: $\sum_{j=1}^{|P|} \covcstr_j$;
\item create a node for each reviewer $r_i \in R$ and a directed edge between the $s$ and each reviewer node with capacity $\loadub_i$ and cost $0$;
\item create a node for each paper $p_j
  \in P$ and create a directed edge from each reviewer, $r_i$, to each paper with cost $-A_{ij} \cdot W$, where $W$ is a large positive number to ensure that the cost of each edge is integer. Each such edge has capacity $1$;
\item construct a sink node $t$ with demand equal to the supply at $s$; create a directed edge from each paper $p_j \in P$ to $t$ with capacity $\covcstr_j$ and cost $0$.
\end{enumerate}
Then, solve MCF for $\Gcal$, i.e., find the set of edges in $\Gcal$ used in sending a maximal amount of flow from $s$ to $t$ such that, for each edge $e$, no more flow is sent across $e$ than $e$'s capacity, and such that the sum total cost of all utilized edges is minimal.
Note that algorithms like Ford-Fulkerson can be used to solve MCF and many efficient implementations are publicly available.
It can be shown that the optimal flow plan on this graph corresponds to the optimal solution for the \tpms.
In particular, each edge between a reviewer and paper utilized in the optimal flow plan corresponds to an assignment of a reviewer to a paper.
See Figure \ref{fig:rapflow} for a visual depiction of the $\Gcal$.
 
\subsection{Locally Fair Flows}
\label{subsec:local-flow}
We introduce a MCF-based heuristic, \flowalg, for approximately solving the local fairness formulation via a sequence of MCF problems.
Our algorithm is inspired by the combinatorial approach for (approximately) solving the scheduling problem on
parallel machines~\cite{gairing2007faster}.
\flowalg is comprised of three phases that are repeated until convergence.
In the first phase, a valid assignment is computed and the papers are partitioned into groups; in the second phase, specific assignments are dropped; in the third phase, the assignment computed in the first phase is refined to promote fairness.

In more detail, in phase 1 of \flowalg, $\Gcal$ is constructed using the 4 steps above (Section \ref{subsec:mcf}) and an assignment is constructed using MCF.
Afterwards, the papers are partitioned into three groups as follows:

\[
G(p_j) =
\begin{cases}
  P^{+} & \sum_{r_i \in R}x_{ij}A_{ij} \ge T \\
  P^{0} & T > \sum_{r_i \in R}x_{ij}A_{ij} \ge T - A_{max} \\
  P^{-} & \mathrm{otherwise}.
\end{cases}
\]
In words, the first group contains all papers whose \papscore is
greater than or equal to $T$; the second group contains all papers not in $P^{+}$ but whose \papscore is greater than $T$ \emph{minus} the maximum score; the third group contains all other papers.

In the second phase, for each paper $p \in P^{-}$ the reviewer assigned to that paper in phase 1 with the lowest affinity is \emph{unassigned} from $p$.

In the third phase, a \emph{refinement network}, $\Gcal'$, is constructed.
At a high-level, the refinement network routes flow from the papers in $P^{+}$ back through their reviewers and eventually to the papers in $P^{-}$ with the goal of reducing the number of papers with \papscores less than $T-A_{max}$.
The network is constructed as follows:
\begin{enumerate}
\item create a source node, $s$, with supply equal to the minimum among the number of papers in $P^{+}$ and $P^{-}$;
\item create a node for each $p \in P$; for each $p \in P^{+}$, create an edge from $s$ to $p$ with capacity $1$ and cost $0$;
\item create a node for each reviewer $r \in R$;
\item for each paper $p \in P^{+}$, create an edge with capacity $1$ and cost $0$ from $p$ to each reviewer assigned to $p$;
\item for each paper $p \in P^0$, create a dummy node, $p'$ and construct an edge from $p'$ to $p$ with capacity $1$ and cost $0$.
\item for each reviewer, $r$ assigned to a paper in $P^{+}$, create an edge with capacity $1$ and cost $0$ to each dummy paper, $p'$, if $r$ was not assigned to the paper to which $p'$ is connected;
\item for each paper $p \in P^{0}$ with dummy node $p'$, let $S_p$ be the current paper score at $p$, let $R(p')$ be the set of reviewers with edges ending at $p'$ and let $R(p)$ be the set of reviewers currently assigned to $p$. Let $A_{min}$ be the minimum affinity among the reviewers in $R(p')$ with respect to $p$. For each $r \in R(p)$ construct an edge with capacity $1$ and cost $0$ from $p$ to each $r$ if $T - A_{max} \le  S_p + A_{min} - A_{rp}$;
\item for each reviewer, $r$, construct an edge with capacity $1$ to each paper in $p \in P^{-}$ if $r$ is not currently assigned to that paper. If assigning $r$ to $p$ would cause $p$'s group to change to $P^{0}$, the cost of the edge is $-A_{rp} \cdot Z$, where $Z >> W$; otherwise, the cost is $-A_{rp} \cdot W$ (again, $Z$ is a large constant that ensures that edge costs are integral);
\item create a sink node $t$ with demand equal to the supply at $s$; for each paper $p \in P^{-}$ construct an edge from $p$ to $t$ with capacity $1$ and cost $0$.
\end{enumerate}
A visual illustration of the refinement network appears in Figure   ~\ref{fig:refinenet}.

After the network is constructed, MCF in $\Gcal'$ is solved.
The MCF in the refinement network effectively reassigns up to 1 reviewer from each paper in $P^{+}$ to a paper in either $P^{0}$ or $P^{-}$.
Additionally, up to 1 reviewer from each paper in $P^0$ may be reassigned to a paper in $P^{-}$.
As before, any edge in the optimal flow plan from a reviewer to a paper (or that paper's dummy node) corresponds to an assignment.
Any edge from a paper to a reviewer corresponds to \emph{unassigning} the reviewer from the corresponding paper.

Formally, we prove the following fact:
\begin{fact}
\label{fact:decrease}
After modifying an assignment according to the optimal flow plan in $\Gcal'$, no new papers will be added to $P^{-}$.
\end{fact}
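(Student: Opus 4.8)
The plan is to argue that the min-cost flow in the refinement network $\Gcal'$ can only move reviewers around in ways that are certified, edge-by-edge, to keep every paper's \papscore at or above $T - A_{max}$, so no paper can fall into $P^-$ that was not already there. Since $P^-$ can only shrink in phase 2 (where we only remove reviewers from papers already in $P^-$) and cannot grow there, it suffices to control phase 3. First I would fix notation for what the optimal flow plan in $\Gcal'$ does: because every reviewer-related edge has capacity $1$ and the source sends one unit per $P^+$ paper, each $p \in P^+$ loses at most one reviewer, each $p \in P^0$ loses at most one reviewer (through its dummy node $p'$), and each $p \in P^-$ gains at most one reviewer. Crucially, a reviewer is only moved along a path of the form $s \to p_{+} \to r \to (\text{either } p_{-} \text{ or } p' \to p_0 \to r' \to p_{-})$, so the only papers that \emph{lose} a reviewer are in $P^+ \cup P^0$ and the only papers that \emph{gain} are in $P^0 \cup P^-$.

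The core of the argument is then a case analysis on which group a paper $p$ belongs to and what the flow does to it. If $p \in P^-$: it only gains reviewers, so its score does not decrease and it stays in $P^-$ — no new membership. If $p \in P^0$: it may lose one reviewer $r \in R(p)$ via an edge $p \to r$, but such an edge exists only when $T - A_{max} \le S_p + A_{min} - A_{rp}$, where $A_{min}$ is the smallest affinity among reviewers feeding the dummy $p'$; and since the flow conservation at $p'$ forces $p$ to simultaneously gain a reviewer $\tilde r \in R(p')$ with $A_{\tilde r p} \ge A_{min}$, the new score is $S_p - A_{rp} + A_{\tilde r p} \ge S_p - A_{rp} + A_{min} \ge T - A_{max}$, so $p$ does not drop to $P^-$. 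If $p \in P^+$: it may lose one reviewer $r$ with affinity $A_{rp}$, and since $r$ has an edge from $p$ only if $r$ was assigned to $p$, we have $S_p \ge T$ before the move and $S_p - A_{rp} \ge T - A_{rp} \ge T - A_{max}$ afterward, so $p$ lands in $P^+ \cup P^0$, never $P^-$. Finally, a paper $p \in P^0$ (or $P^+$) that only \emph{gains} reviewers, or a paper untouched by the flow, trivially keeps its score, hence its group membership or moves "upward." Putting the cases together, every paper ends phase 3 with score $\ge T - A_{max}$ unless it was already in $P^-$, which is exactly the claim.

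The step I expect to be the main obstacle is making the $P^0$ case fully rigorous: one has to use flow conservation at the dummy node $p'$ to pair the reviewer \emph{leaving} $p$ with the reviewer \emph{entering} $p$, and to check that the defining inequality of the outgoing edge $p \to r$ (stated in terms of $A_{min}$, not the actual incoming reviewer's affinity) really does bound the post-move score from below — i.e., that $A_{min}$ is a valid lower bound for whichever $\tilde r \in R(p')$ the flow actually routes through $p'$. A secondary subtlety is confirming that the capacity-$1$ structure genuinely limits each paper to a single in/out reviewer change; this follows from the capacities on the $s \to p_+$, $p_+ \to r$, $p' \to p_0$, and $p \to t$ edges, but it should be spelled out so that the additive score bookkeeping in each case involves exactly one term. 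The $Z \gg W$ weighting and the dummy-node gadget do not affect correctness of this fact — they only steer the optimizer toward better solutions — so I would remark that they can be ignored for the purposes of this proof.
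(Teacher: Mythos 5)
Your proposal is correct and follows essentially the same argument as the paper's proof: a case analysis showing that a paper in $P^+$ can lose at most one reviewer and hence at most $A_{max}$ of score (landing it no lower than $P^0$), and that a paper in $P^0$ can only lose a reviewer along an edge whose defining condition from step 7, combined with flow conservation through the dummy node (which forces a compensating incoming reviewer of affinity at least $A_{min}$), guarantees its score stays at or above $T - A_{max}$. Your version is somewhat more explicit than the paper's about capacity accounting and the pairing of incoming and outgoing reviewers at $P^0$ nodes, but the substance is identical.
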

\noindent The proof of Fact \ref{fact:decrease} appears in the appendix.

After solving MCF in the refinement network, some papers in
$P^{+}$ and $P^{-}$ may be assigned $\covcstr - 1$ reviewers, which violates the paper capacity constraints.
To make the assignment valid, solve MCF in the original flow network (Figure \ref{fig:rapflow}) with respect to the current assignment, the available reviewers, and the papers in violation.

\flowalg can only terminate after a valid solution has been constructed (i.e., after phase 1).
The three phases are repeated until either: a) there are no papers in $P^{-}$ or b) the number of papers in $P^{-}$ remains the same after two successive iterations. 

\paragraph{Load Lower Bounds.}
Incorporating reviewer load lower bounds can be done by adding a single step to \flowalg.  
Specifically, in phase 1, first construct a network where the capacity on the edge from $s$ to $r_i$ is $\loadlb_i$ (rather than $\loadub_i$).  
The total flow through the network is $\sum_{i=1}^{|R|}\loadlb_i$ and thus all load lower bounds are satisfied.
Once this initial flow plan is constructed, record the corresponding assignments and update the capacity of each edge between $s$ and $r_i$ to be $\loadub_i - \loadlb_i$.
Similarly, update the capacity of each edge between $p_j$ and $t$ to be the difference between the paper's coverage constraint and the number of reviewers assigned to $p_j$ in the initial flow plan.
The flow plan through the updated network, combined with the initial flow plan, constitute a valid assignment.
Afterwards, continue with phases 2 and 3 as normal.
The additional step must be performed in each invocation of phase 1.
\section{Experiments}
\label{sec:exp}

In this section we compare 4 paper matching algorithms:
\begin{enumerate}[noitemsep,topsep=0pt,parsep=0pt,partopsep=0pt,leftmargin=*]
    \item \textbf{\textsc{TPMS}} - optimal matching with respect to the \tpms.
    \item \textbf{\iralg} - our method, Algorithm \ref{alg:iralg}.
    \item \textbf{\flowalg} - our min-cost-flow-based algorithm (Section \ref{subsec:local-flow}).
    \item \textbf{\prfa~\cite{stelmakh2018peerreview4all}} - state-of-the-art flow-based paper matching algorithm that maximizes the minimum \papscore. For large problems we only run 1 iteration (\prfa (i1)).
\end{enumerate}
\textsc{TPMS}, \iralg and \prfa are implemented in Gurobi--an industrial mathematical programming toolkit~\cite{Gurobi-Optimization:2015aa}.
\flowalg is implemented using OR-Tools\footnote{ \url{https://developers.google.com/optimization/}}.

In our experiment we use data from 3 real conferences\footnote{Our data is anonymous and kindly provided by OpenReview.net and the Computer Vision Foundation.}.
Each dataset is comprised of: a matrix of paper-reviewer affinities (paper and reviewer identities are anonymous), a set of coverage constraints (one per paper), and a set of load upper bound constraints (one per paper).
One of our datasets also includes load lower bounds.
We do not evaluate \prfa on datasets when the load lower bounds are included since it was not designed for this scenario.

We report various statistics of each matching.
For completeness, we also include the runtime of each algorithm.
However, note that an algorithm's runtime is significantly affected by a number of factors, including: hardware, the extent to which the algorithm has been optimized, dataset on which it is run, etc.
All experiments are run on the same MacBook Pro with an Intel i7 processor.

\paragraph{Finding fairness thresholds.} Both \iralg and \flowalg take as input a fairness threshold, $T$.
Since the best value of this threshold is unknown in advance, we search for the best value using 10 iterations of binary search.
For \iralg, at iteration $i$ with threshold $T_i$, we use a linear programming solver to check whether there exists an optimal solution to the relaxation of the corresponding local fairness formulation.
By Fact \ref{fact:frac}, if a solution exists, then \iralg will successfully return an integer solution.
For \flowalg we do a similar binary search and return the threshold that led to the largest minimum \papscore.
In our implementation of \flowalg, when we test a new threshold $T$ during the binary search, we initialize from the previously computed matching.
Note that \prfa does not require such a threshold as an input parameter.

\paragraph{Matching profile boxplots.}
We visualize a matching via a set of \papscore quintiles, which we call it's \emph{profile}.
To construct the profile of a matching, compute the \papscore of each paper and sort in non-decreasing order.
The sorted list of scores is divided into 5 groups, each group containing an equal number of papers\footnote{Most datasets do not include a number of papers that is divisible by 5; in this case, the last quintile has fewer papers.}.
Each group of sorted \papscores is further divided into 4 even groups, $a, b, c$ and $d$ (with $a$ and $d$ containing the smallest and largest \papscores, respectively).
In each profile visualization that follows, the box in each column is defined by the minimum score in $b$, $b_{min}$, and maximum score in $c$, $c_{max}$ for the corresponding group (i.e, quintile).
The lowest horizontal line in a column is defined by the smallest \papscore that is greater than or equal to $b_{min} - \frac{c_{max} - b_{min}}{2}$; the highest horizontal line in the column is defined by the largest \papscore that is smaller than or equal to $c_{max} + \frac{c_{max} - b_{min}}{2}$.
The rest of the points are considered outliers and denoted by red \textsc{x}'s.
The median \papscore among $a,b,c$ and $d$ is represented as an orange line.
A matching's profile provides a visual summary of the distribution of \papscores it induces, including the best and worst \papscores.

\subsection{Medical Imaging and Deep Learning}
\label{sec:midl}
In our first experiment we use data from the Medical Imaging and Deep Learning (MIDL) Conference.
The data includes affinities of 177 reviewers for 118 papers.
The affinities range from -1.0 to 1.0.
Each paper must be reviewed by 3 reviewers and each reviewer must be assigned no more than 4 and no fewer than 2 papers (i.e., the data includes upper and lower bounds on reviewer loads).

\begin{figure*}[t]
\captionsetup[subfigure]{justification=centering}
\begin{subfigure}[h]{0.24\columnwidth}
  \centerline{\includegraphics[width=\columnwidth]{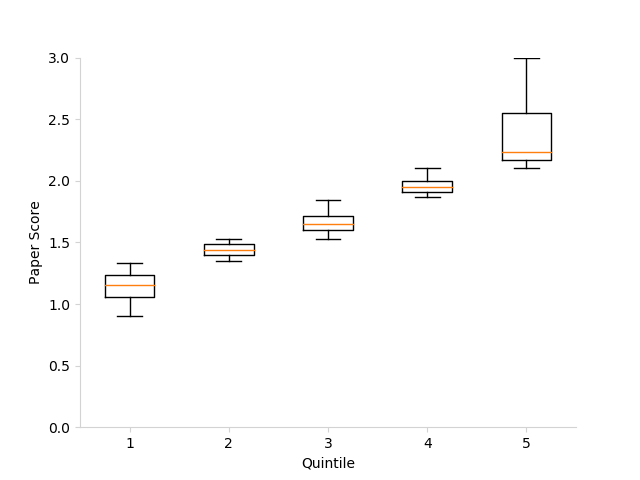}}
  \caption{\textsc{TPMS}.}
  \label{fig:midl-basic}
\end{subfigure}
\begin{subfigure}[h]{0.24\columnwidth}
  \centerline{\includegraphics[width=1.0\columnwidth]{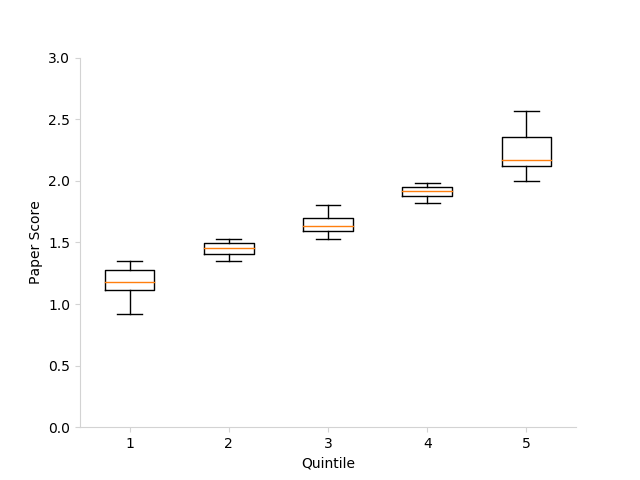}}
  \caption{\prfa.}
  \label{fig:midl-pr4a}
\end{subfigure}
\begin{subfigure}[h]{0.24\columnwidth}
  \centerline{\includegraphics[width=1.0\columnwidth]{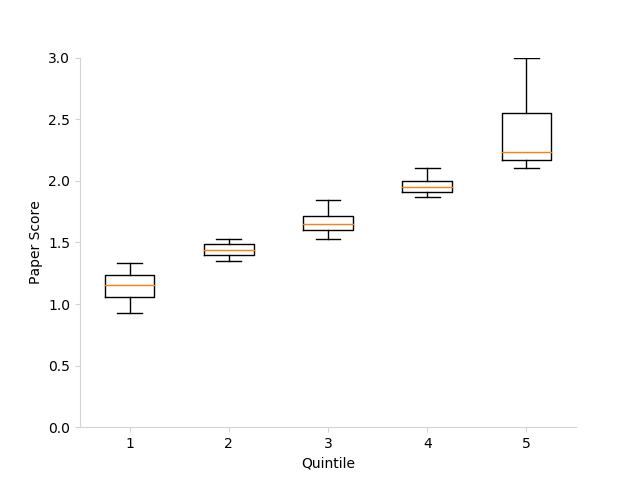}}
  \caption{\iralg.}
  \label{fig:midl-iralg}
\end{subfigure}
\begin{subfigure}[h]{0.24\columnwidth}
  \centerline{\includegraphics[width=1.0\columnwidth]{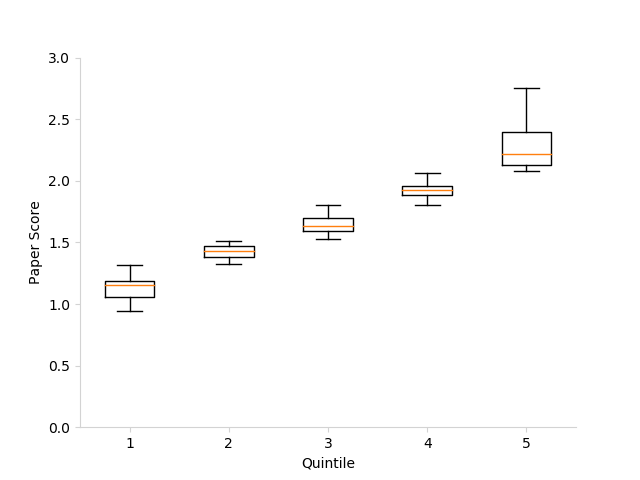}}
  \caption{\flowalg.}
    \label{fig:midl-msflow}
\end{subfigure}\\
\begin{subfigure}[h]{0.32\columnwidth}
  \centerline{\includegraphics[width=\columnwidth]{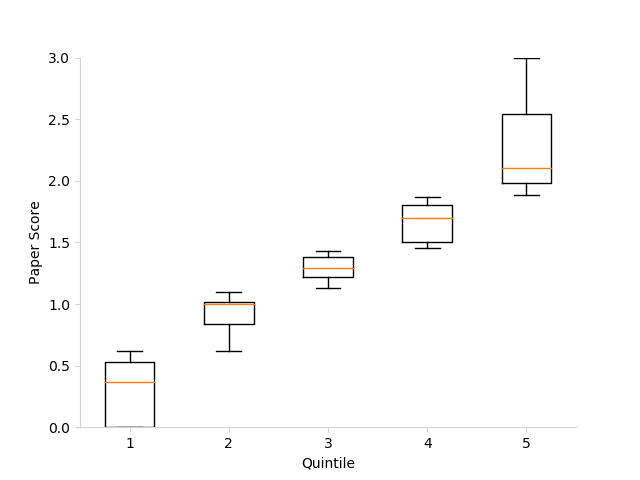}}
  \caption{\textsc{TPMS} with lower bounds.}
  \label{fig:midl-basic-lb}
\end{subfigure}
\begin{subfigure}[h]{0.32\columnwidth}
  \centerline{\includegraphics[width=1.0\columnwidth]{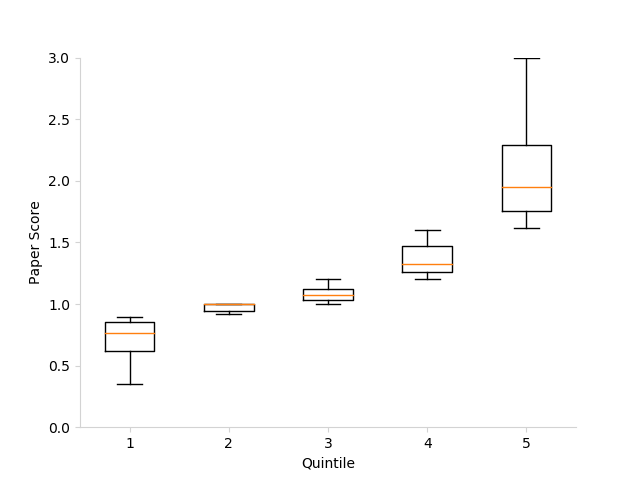}}
  \caption{\iralg with lower bounds.}
    \label{fig:midl-iralg-lb}
\end{subfigure}
\begin{subfigure}[h]{0.32\columnwidth}
  \centerline{\includegraphics[width=1.0\columnwidth]{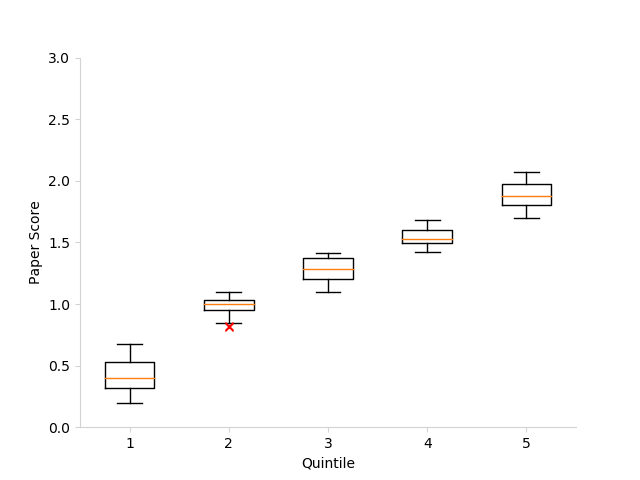}}
  \caption{\flowalg with lower bounds.}
  \label{fig:midl-msflow-lb}
\end{subfigure}
\caption{\textbf{\midl.} Figures \ref{fig:midl-basic}-\ref{fig:midl-msflow} visualize profiles of matchings computed by \textsc{TPMS}, \prfa, \iralg and \flowalg on the \midl data with reviewer load lower bounds excluded. All 4 algorithms construct similar profiles. Figures \ref{fig:midl-basic-lb}-\ref{fig:midl-msflow-lb} visualize profiles of matchings constructed with respect to the lower bounds. The introduction of lower bounds leads to many papers with low \papscores in the \textsc{TPMS} matching.}
\label{fig:midl}
\end{figure*}

\begin{table*}[tbh]
  \centering
  \scriptsize
\begin{tabular}{c c c c c c c c c c c c c}
\hline
Data & Bounds & Alg & Time (s) & Obj & Min PS & Max PS & Mean PS & Std PS & Min RA & Max RA & Std RA \\
\hline
        &Up          & \textsc{TPMS} & \emph{0.10}   & \emph{201.88}   & 0.90          & \emph{3.00}          & \textbf{1.71} & 0.45 & 0 & 4 & 1.80 \\
        &Up          & \prfa         & 293.83        & 197.32          & 0.92          & 2.57                 & 1.67          & \emph{\textbf{0.38}} & 0 & 4 & 1.79 \\
        &Up          & \iralg        & 1.60          & \textbf{201.83} & 0.93          & \emph{\textbf{3.00}} & \textbf{1.71} & 0.45 & 0 & 4 & 1.80 \\
\midl   & Up         & \flowalg      & \textbf{1.15} & 197.67          & \textbf{0.94} & 2.75                 & 1.68          & 0.41 & 0 & 4 & 1.80 \\
        &Lo + Up  & \textsc{TPMS} & \emph{0.17}   & \emph{150.04}   & 0.00          & \emph{3.00}          & \emph{1.27}   & 0.69 & 2 & 2 & 0.00 \\
        &Lo + Up  & \iralg        & 3.01          & \textbf{145.56} & \textbf{0.35} & \emph{\textbf{3.00}} & \textbf{1.23} & \textbf{0.50} & 2 & 2 & 0.00 \\
        &Lo + Up  & \flowalg      & \textbf{2.17} & 143.12          & 0.19          & 2.07                 & 1.21          & \textbf{0.50} & 2 & 2 & 0.00 \\
\hline
\hline
        &Up         & \textsc{TPMS}  & \emph{47.24}    & \emph{5443.64}   & 0.00                 & 3.00 & \emph{2.08}    & 1.07 & 0 & 6 & 0.82 \\
        &Up         & \prfa (i1)     & 3251.37         & 5134.08          & \textbf{0.77}        & 3.00 & 1.96           & \emph{\textbf{0.52}} & 0 & 6 & 1.24 \\
        &Up         & \iralg         & 594.51          & \textbf{5373.39} & 0.27                 & 3.00 & \textbf{2.05}  & 0.84 & 0 & 6 & 0.83 \\
\cvpra  & Up        & \flowalg       & \textbf{225.29} & 4444.95          & \textbf{0.77}        & 3.00 & 1.69           & 0.64 & 2 & 6 & \emph{\textbf{0.61}} \\
        &Lo + Up & \textsc{TPMS}  & \emph{49.62}    & \emph{5443.64}   & 0.00                 & 3.00 & \emph{2.08}    & 1.07 & 2 & 6 & 0.78 \\
        &Lo + Up & \iralg         & 694.03          & \textbf{5373.23} & 0.29                 & 3.00 & \textbf{2.05}  & 0.84 & 2 & 6 & 0.87 \\
        &Lo + Up & \flowalg       & \textbf{587.69} & 4339.60          & \emph{\textbf{0.94}} & 3.00 & 1.65           & \emph{\textbf{0.48}} & 3 & 6 & \emph{\textbf{0.63}} \\
\hline
\hline
        & Up         & \textsc{TPMS} & \emph{256.73}    & \emph{112552.11}   & 1.37                  & \textbf{29.24} & \emph{22.23}   & 5.52 & 0 & 9 & 2.97 \\
        &Up          & \prfa (i1)    &  8683.79         & 108714.98          & \emph{\textbf{12.68}} & 29.13          & 21.48          & \textbf{3.86} & 0 & 9 & 2.97 \\
        &Up          & \iralg        & 3785.64          & \textbf{112263.94} & 7.19                  & \textbf{29.24} & \textbf{22.18} & 4.75 & 0 & 9 & 2.96 \\  
2018  &Up          & \flowalg      & \textbf{910.08}  & 91029.66           & 11.12                 & 29.19          & 17.98          & 4.49 & 0 & 9 & \textbf{2.91} \\
        &Lo + Up  & \textsc{TPMS} & \emph{636.01}    & \emph{108634.18}   & 0.00                  & \textbf{29.24} & \emph{21.46}   & 6.28 & 2 & 9 & 1.66 \\
        &Lo + Up  & \iralg        & 4666.27    & \textbf{108083.00} & 7.17                  & \textbf{29.24} & \textbf{21.35} & 5.06 & 2 & 9 & 1.67 \\ 
        &Lo + Up  & \flowalg      & \textbf{1790.71} & 86166.07           & \emph{\textbf{10.52}} & 22.79          & 17.02          & \textbf{2.77} & 2 & 9 & \textbf{1.61} \\
\hline
\end{tabular}
\caption{\textbf{\midl, \cvpra and \cvprb (2018) matching statistics.} PS denotes a \papscore and RA denotes the number of papers assigned to a reviewer. Values in \emph{italic} represent the best performer; \textbf{bold} values indicate the best of the fair matching algorithms. \flowalg is always the fastest of the fair algorithms while \iralg always achieves the highest objective score.  \flowalg is always competitive with or outperforms \prfa in terms of the fairness with respect to the most disadvantaged paper.}
\label{fig:big-tbl}
\end{table*}

Figure \ref{fig:midl} displays the profiles of matchings computed by the 4 algorithms with and without lower bounds.
Without lower bounds, all algorithms produce similar profiles, except that the maximum \papscore achieved by \prfa and \flowalg are lowest.
Somewhat similarly, these two algorithms achieve lower objective scores, which is likely a result of the fact that neither explicitly maximizes the global sum of \papscores.
Interestingly, \textsc{TPMS} constructs a matching that is relatively fair with respect to \papscores even though it is not designed to do so.

When lower bounds are considered, the algorithms produce much different profiles.
First, notice that \textsc{TPMS} constructs a matching in which some papers have a corresponding \papscore of 0--signaling an unfair assignment.
Of the fair matching algorithms, \iralg's profile includes a higher minimum \papscore, a higher maximum \papscore, and a higher objective score.
However, \iralg is ~40\% slower than \flowalg.
Also note that on this small dataset, we run \prfa with no upper bound on the number of iterations (hence the long runtime).
Table \ref{fig:big-tbl} (first block) contains matching statistics of the various algorithms for \midl.

\subsection{\cvpra}
\label{sec:cvpr}
Our next experiment is performed with respect to data from a previous year's Conference on Computer Vision and Pattern Recognition (\cvpra).
The data includes the affinities of 1373 reviewers for 2623 papers, which amounts to a substantially larger problem than that posed by the \midl data.
All affinities are between 0.0 and 1.0.
As before, each paper must be reviewed by 3 different reviewers.
Each reviewer may not be assigned to more than 6 papers.
Our data does not contain lower bounds.
For the purpose of demonstration, we construct a set of synthetic reviewer load lower bounds where all reviewers must review at least 2 papers.

\begin{figure*}[t]
\captionsetup[subfigure]{justification=centering}
\begin{subfigure}[h]{0.24\columnwidth}
  \centerline{\includegraphics[width=\columnwidth]{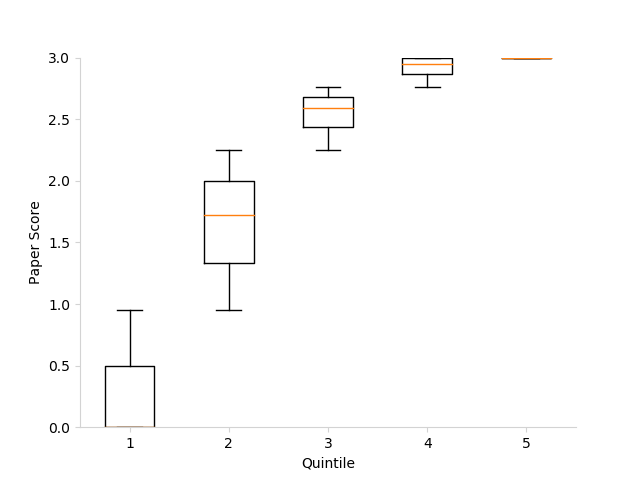}}
  \caption{\textsc{TPMS}.}
  \label{fig:cvpr-basic}
\end{subfigure}
\begin{subfigure}[h]{0.24\columnwidth}
  \centerline{\includegraphics[width=1.0\columnwidth]{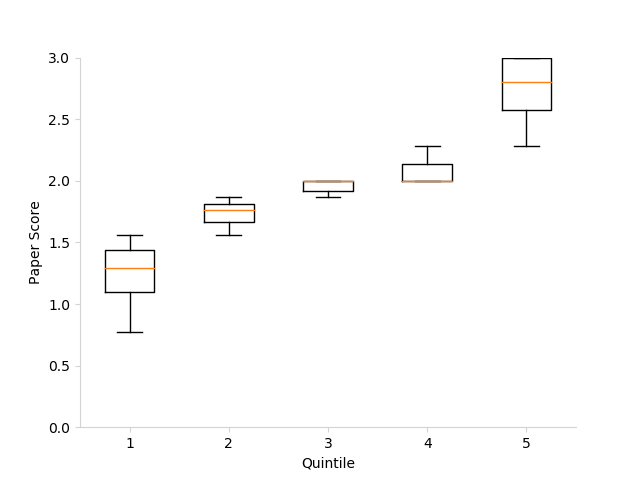}}
  \caption{\prfa (1 Iteration).}
  \label{fig:cvpr-pr4a}
\end{subfigure}
\begin{subfigure}[h]{0.24\columnwidth}
  \centerline{\includegraphics[width=1.0\columnwidth]{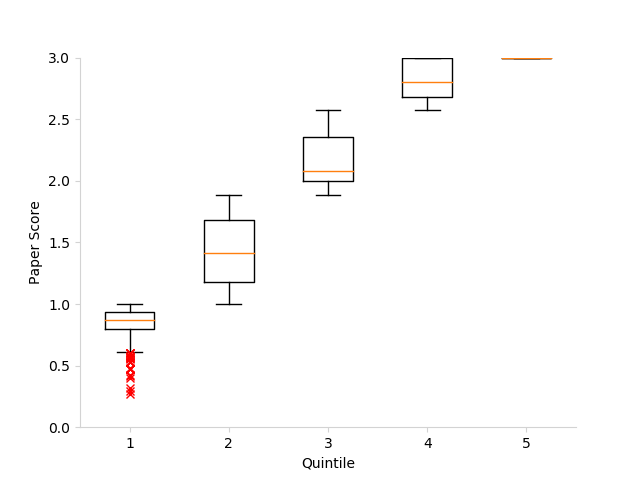}}
  \caption{\iralg.}
  \label{fig:cvpr-iralg}
\end{subfigure}
\begin{subfigure}[h]{0.24\columnwidth}
  \centerline{\includegraphics[width=1.0\columnwidth]{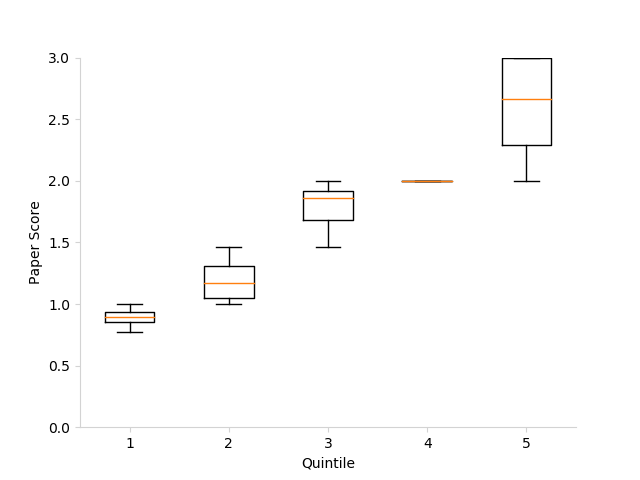}}
  \caption{\flowalg.}
    \label{fig:cvpr-msflow}
\end{subfigure}\\
\begin{subfigure}[h]{0.32\columnwidth}
  \centerline{\includegraphics[width=\columnwidth]{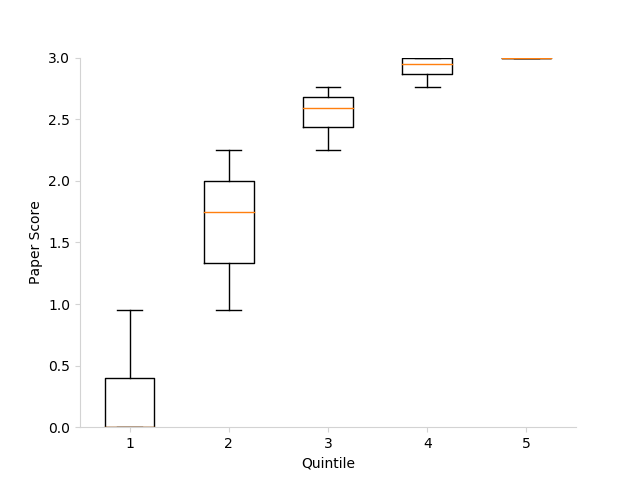}}
  \caption{\textsc{TPMS} with lower bounds.}
  \label{fig:cvpr-basic-lb}
\end{subfigure}
\begin{subfigure}[h]{0.32\columnwidth}
  \centerline{\includegraphics[width=1.0\columnwidth]{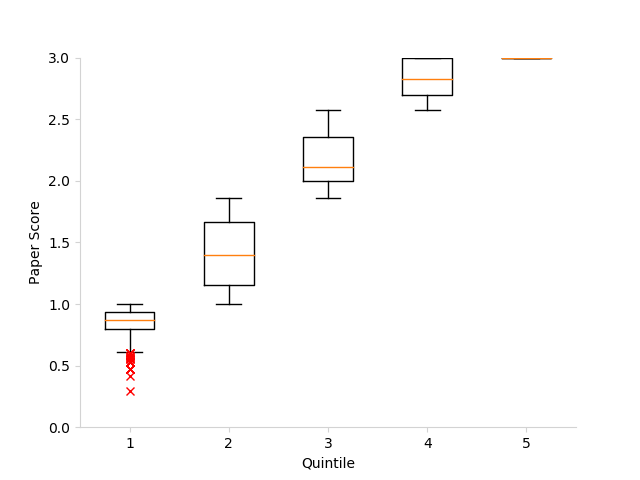}}
  \caption{\iralg with lower bounds.}
    \label{fig:cvpr-iralg-lb}
\end{subfigure}
\begin{subfigure}[h]{0.32\columnwidth}
  \centerline{\includegraphics[width=1.0\columnwidth]{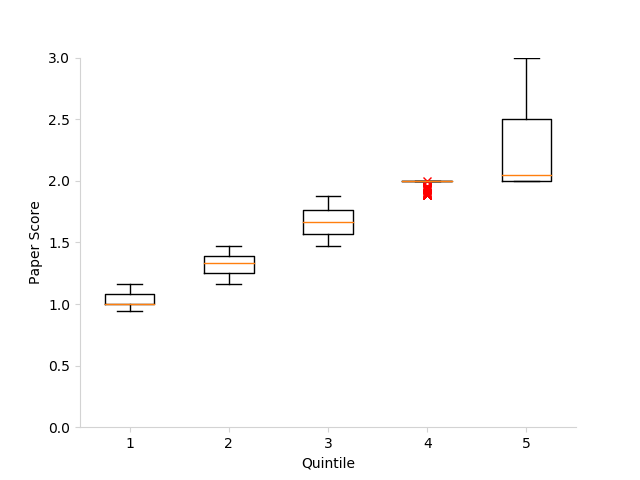}}
  \caption{\flowalg with lower bounds.}
  \label{fig:cvpr-msflow-lb}
\end{subfigure}
\caption{\textbf{Matching Profiles for \cvpra.}}
\label{fig:cvpr}
\end{figure*}

The results are contained in Figure \ref{fig:cvpr} and Table \ref{fig:big-tbl} (second block).
As before, \flowalg is the fastest fair matching algorithm, achieving 2x speedup over \iralg and an order of magnitude speedup over \prfa when lower bounds are excluded.
When lower bounds are included, \flowalg is still ~100s (15\%) faster than \iralg.
\prfa and \iralg achieve similar fairness.
Interestingly, \flowalg finds the matching with highest degree of fairness when lower bounds on reviewing loads are applied.
However, this comes at the expense of a relatively low objective score.
\iralg constructs a more fair matching than \textsc{TPMS}, but not than the other two fair matching algorithms.
This is unsurprising because \iralg optimizes the global objective, unlike the other algorithms, which more directly optimize fairness.
\iralg's balance between fairness and global optimality is illustrated by \iralg's profile (Figure \ref{fig:cvpr-iralg-lb}), which contains a handful outliers with low scores, but many papers with comparatively high \papscore in quintiles 3, 4 and 5.

\subsection{\cvprb}
\label{sec:large}
In our final experiment, we use data from CVPR 2018 (\cvprb).
The data contains the affinities of 2840 reviewers for 5062 papers--a substantial increase in problem size over \cvpra.
Affinities range between 0.0 and 11.1, with many scores closer to 0.0 (the mean score is ~0.36).
Each paper must be reviewed 3 times.
Reviewer load upper bounds vary by reviewer and range between 2.0 and 9.0.
Again, the data does not include load lower bounds and so we construct synthetic lower bounds of 2.0 for all reviewers.
Because of the size of the problem, the binary search for a suitable value of $T$ did not terminate within 5 hours.
Therefore, we select $T$ by summing the minimum \papscore found by \flowalg and $\frac{1}{2}A_{max}$.
The reported run time includes the run time of \flowalg.

Table \ref{fig:big-tbl} (third vertical block) reveals similar trends with respect to speed (\flowalg is most efficient) and fairness (\prfa and \iralg are the most fair).
Figure \ref{fig:cvpr2018} displays the corresponding matching profiles.
\begin{figure*}[t]
\captionsetup[subfigure]{justification=centering}
\begin{subfigure}[h]{0.24\columnwidth}
  \centerline{\includegraphics[width=\columnwidth]{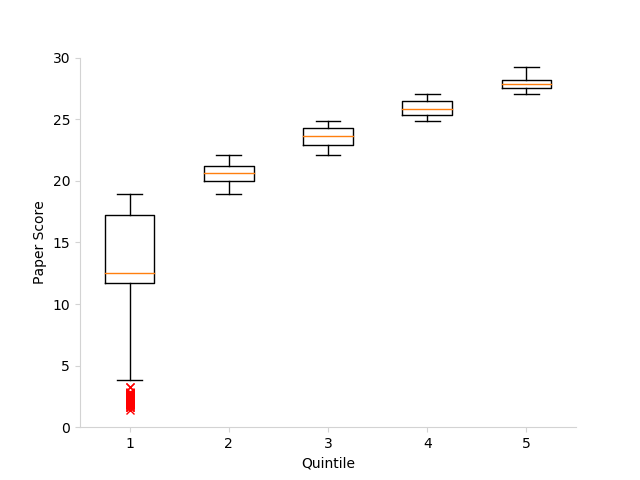}}
  \caption{\textsc{TPMS}.}
  \label{fig:cvpr2018-basic}
\end{subfigure}
\begin{subfigure}[h]{0.24\columnwidth}
  \centerline{\includegraphics[width=1.0\columnwidth]{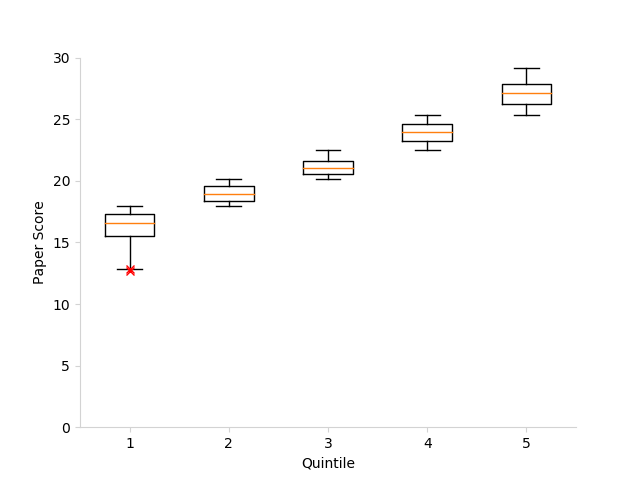}}
  \caption{\prfa (1 Iteration).}
  \label{fig:cvpr2018-pr4a}
\end{subfigure}
\begin{subfigure}[h]{0.24\columnwidth}
  \centerline{\includegraphics[width=1.0\columnwidth]{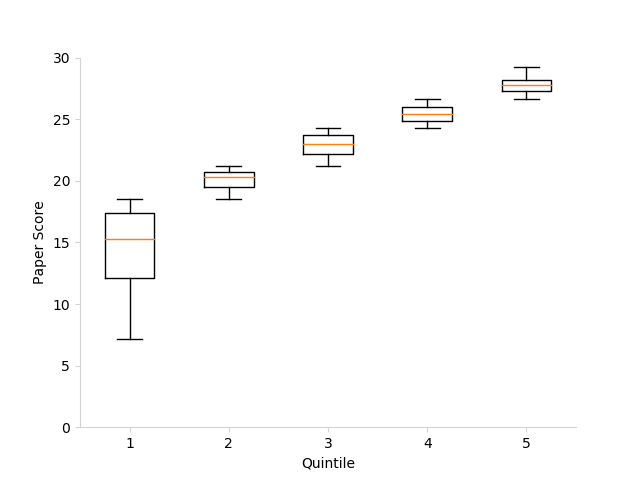}}
  \caption{\iralg.}
  \label{fig:cvpr2018-iralg}
\end{subfigure}
\begin{subfigure}[h]{0.24\columnwidth}
  \centerline{\includegraphics[width=1.0\columnwidth]{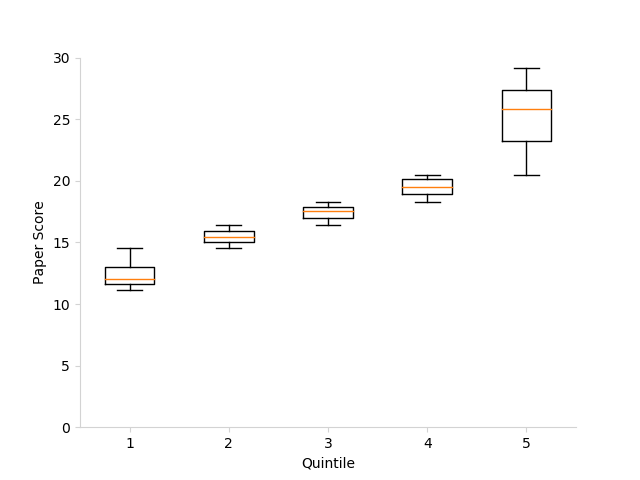}}
  \caption{\flowalg.}
    \label{fig:cvpr2018-msflow}
\end{subfigure}\\
\begin{subfigure}[h]{0.32\columnwidth}
  \centerline{\includegraphics[width=\columnwidth]{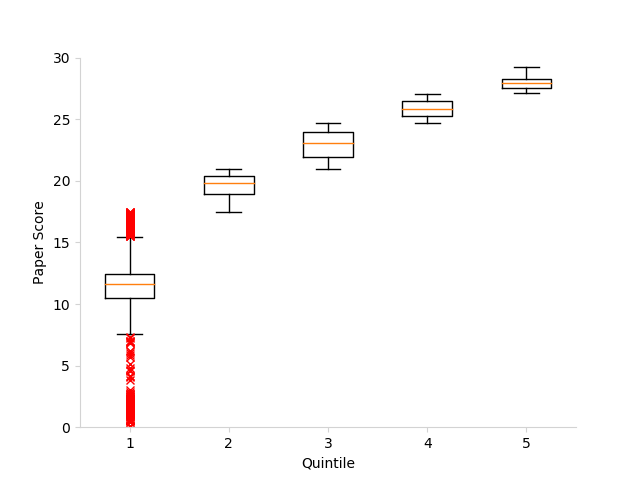}}
  \caption{\textsc{TPMS} with lower bounds.}
  \label{fig:cvpr2018-basic-lb}
\end{subfigure}
\begin{subfigure}[h]{0.32\columnwidth}
  \centerline{\includegraphics[width=1.0\columnwidth]{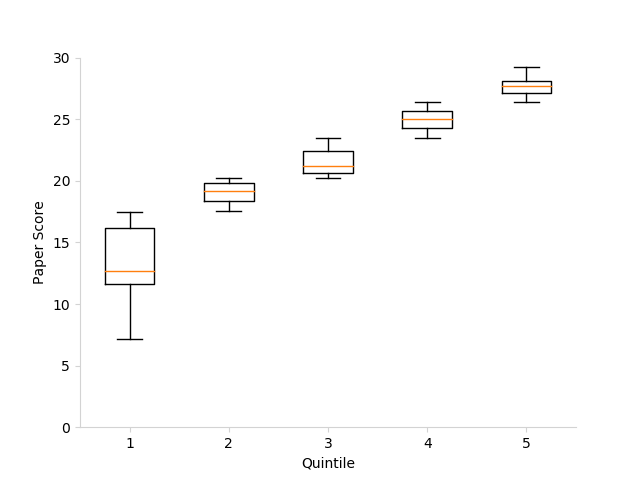}}
  \caption{\iralg with lower bounds.}
    \label{fig:cvpr2018-iralg-lb}
\end{subfigure}
\begin{subfigure}[h]{0.32\columnwidth}
  \centerline{\includegraphics[width=1.0\columnwidth]{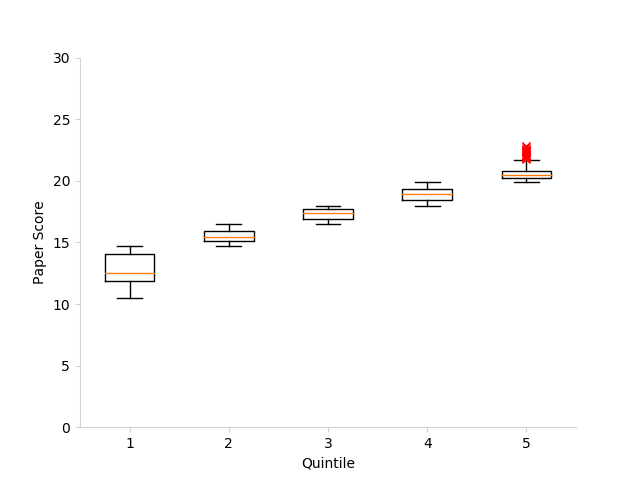}}
  \caption{\flowalg with lower bounds.}
  \label{fig:cvpr2018-msflow-lb}
\end{subfigure}
\caption{\textbf{Matching Profiles for \cvprb.}}
\label{fig:cvpr2018}
\end{figure*}
\section{Related Work}
\label{sec:related}
Our work is most similar to previous studies that develop algorithms for constructing fair assignments for the RAP.
Two studies propose to optimize for fairness with respect to the least satisfied reviewer, which can be formulated as a maximization over the minimum \papscore with respect to an assignment~\cite{garg2010assigning, stelmakh2018peerreview4all}.
The first algorithm, to which we compare, is \prfa~\cite{stelmakh2018peerreview4all}.
\prfa iteratively solves maximum-flow through a sequence of specially constructed networks, like our \flowalg, and is guaranteed to return a solution that is within a bounded multiplicative constant of the optimal solution with respect to their maximin objective.
As demonstrated in experiments, \flowalg is faster than \prfa and achieves similar quality solutions on data from real conferences.
We note that the work introducing \prfa also presents a statistical study of the acceptance of the \emph{best} papers among a batch submitted; we do not focus on paper acceptance in this work.

The second work proposes a rounding algorithm and prove an additive, constant factor approximation of the optimal assignment, like we do~\cite{garg2010assigning}.  
We note that both their algorithm and proof techniques are different from ours.
However, their algorithm requires solving a new linear program for each reviewer during each iteration, which is unlikely to scale to large problems.
Moreover, \prfa directly compares favorably to this algorithm~\cite{stelmakh2018peerreview4all}.

With respect to fairness, the creators of TPMS perform experiments that enforce load equity among reviewers (i.e., each reviewer should be assigned a
similar number of papers) via adding penalty terms to the objective~\cite{charlin2012framework}. 
These researcher, and others, explore formulations that maximize the minimum affinity among all assigned reviewers, which is different from our fairness constraint~\cite{o2005paper, wang2008survey}.
Others have posed instances of the RAP that require at least one reviewer assigned to each paper to have an affinity greater than $T$.
In this setting, one classic piece gives an algorithm for constructing assignments that maximizes $T$ by modeling the RAP as a transshipment problem~\cite{hartvigsen1999conference}.
Other objectives have been considered for the RAP, but these tend to be global optimizations with no local constraints that can lead to certain papers being assigned groups of inappropriate reviewers~\cite{goldsmith2007ai, wang2008survey, lian2018conference}.

Some previous work on the RAP models each paper as a
binary set of topics and each reviewer as a binary set of expertises (the overall sets of topics and expertises are the same). 
In this setting the goal to maximize coverage of each paper's topics by the assigned reviewers' expertises~\cite{merelo2004conference, karimzadehgan2009constrained,
  long2013good}. 
A generalized settings allows paper and reviewer representations to be real-valued vectors rather than binary~\cite{tang2010expertise, kou2015topic}. 
The resulting optimization problems are solved via ILPs, constraint based optimization or greedy algorithms.
While representing papers and reviewers as topic vectors allows for more fine-grained characterization of affinity, in practice, reviewer-paper affinity is typically represented by a single real-value--like the real-conference data we use in experiments.

A significant portion of the work related to the RAP explores methods for modeling reviewer-paper affinities. 
Some of the earliest work employs latent semantic indexing with respect to the abstracts of submitted and previously published papers~\cite{dumais1992automating}. 
More recent work models each author as a mixture of personas and each persona as a mixture of topics; each
paper written by an author is generated from a combination of
personas~\cite{mimno2007expertise}.  
Other approaches use reviewer bids to derive the affinity between papers and reviewers. 
Since reviewers normally do not bid on all papers, collaborative filtering has been used for bid imputation~\cite{conry2009recommender}. 
Finally, some approaches model affinity using proximity in coauthorship networks, citations counts, and the venues in which a paper is published~\cite{rodriguez2008algorithm,li2013automatic, liu2014robust}.
\section{Conclusion}
This work introduces the local fairness formulation of the reviewer assignment problem (RAP) that includes a global objective as well as local fairness constraints.
Since it is NP-Hard, we present two algorithms for solving this formulation.
The first algorithm, \iralg, relaxes the formulation and employs a specific rounding technique to construct a valid matching.
Theoretically, we show that \iralg violates fairness constraints by no more than the maximum reviewer-paper affinity, and may only violate load constraints by 1.
The second algorithm, \flowalg, is a more efficient heuristic that operates by solving a sequence of min-cost flow problems.
We compare our two algorithms to standard matching techniques that do not consider fairness, and a state-of-the-art algorithm that directly optimizes for fairness.
On 3 datasets from recent conferences, we show that \iralg is best at jointly optimizing the global matching while statisfying fairness constraints, and \flowalg is the most efficient of the fairness matching algorithms.
Despite a lack of theoretical guarantees, \flowalg constructs highly fair matchings.

All code for experiments is available here: \url{https://github.com/iesl/fair-matching}.\newline
Anonymized data is either included in the repository or available upon request from the first author.

\section{Acknowledgments}
This material is based upon work supported in part by the Center for Data Science and the Center for Intelligent Information Retrieval, and in part by the Chan Zuckerberg Initiative under the project "Scientific Knowledge Base Construction." B. Saha was supported in part by an NSF CAREER award (no. 1652303), in part by an NSF CRII award (no. 1464310), in part by an Alfred P. Sloan Fellowship, and in part by a Google Faculty Award.  Opinions, findings and conclusions/recommendations expressed in this material are those of the authors and do not necessarily reflect those of the sponsors.

\newpage

\bibliographystyle{ACM-Reference-Format}
\bibliography{ms}

\newpage

\begin{appendix}
\section{\iralg Guarantees}
We restate and then prove Theorem \ref{thm:imma}.

\begin{theorem*}
  Given a feasible instance of the local fairness formulation
  $\Pcal = <R, P, \loadlb, \loadub, \covcstr,A,T>$, \iralg returns
  an integer solution in which each local fairness constraint may be
  violated by at most $A_{max}$, each load constraint may be
  violated by at most 1 and the global objective is maximized.
\end{theorem*}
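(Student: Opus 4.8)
The plan is a textbook iterative-rounding analysis built around one key lemma. Throughout I assume \texttt{maximize} returns a basic (vertex) optimal solution of the current linear program, which is feasible in every round: the first LP is exactly the relaxation of $\Pcal$ (feasible by hypothesis), and each round only fixes some variables to values they already take in the current optimal solution and deletes some constraints, neither of which can destroy feasibility. The central claim --- call it the \emph{progress lemma} --- is that at any vertex of the current LP, at least one of the following holds: (a) some $x_{ij}$ is integral; (b) some paper whose fairness constraint has not been dropped has at most $3$ fractionally-assigned reviewers; (c) some reviewer whose load constraints have not been dropped has at most $2$ fractionally-assigned papers. Given the lemma, in every iteration Algorithm~\ref{alg:iralg} either removes a variable from $X$ (case (a)) or removes a constraint (cases (b), (c)); since there are finitely many variables and finitely many fairness/load constraints, the loop must terminate, and it terminates only when $X=\emptyset$, i.e. with an integral assignment.

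To prove the progress lemma I would run the standard rank argument on a vertex $x^*$ with $m\ge 1$ fractional coordinates, assuming for contradiction that (a), (b), (c) all fail. Since the box constraints are slack at fractional coordinates, $x^*$ restricted to its $m$ fractional coordinates is pinned down by $m$ linearly independent tight constraints drawn from: the coverage equality of a paper having a fractional assignment; the (still-present) fairness constraint of such a paper; and a (still-present) load constraint of a reviewer having a fractional assignment. Letting $F$ be the papers with a fractional assignment, $F_f\subseteq F$ those still carrying a fairness constraint, and $R^+$ the reviewers still carrying a load constraint and having a fractional assignment (each reviewer counted once, since an upper and a lower load bound that are simultaneously tight coincide), this yields $m \le |F| + |F_f| + |R^+|$. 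Counting the fractional coordinates the other way: the coverage equality forces every paper in $F$ to have at least $2$ fractional assignments (a single one would be pinned to an integer), and failure of (b) forces every paper in $F_f$ to have at least $4$; failure of (c) forces every reviewer in $R^+$ to have at least $3$. Hence $m \ge 4|F_f| + 2|F\setminus F_f|$ and $m \ge 3|R^+|$. Feeding the first bound into $m \le |F| + |F_f| + |R^+| = 2|F_f| + |F\setminus F_f| + |R^+|$ gives $2|F_f| + |F\setminus F_f| \le |R^+|$, and feeding in the third gives $2|R^+| \le 2|F_f| + |F\setminus F_f|$; together these force $|R^+| = 0$, hence $|F| = 0$ and $m = 0$, a contradiction. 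I expect this counting step --- choosing the thresholds $3$ and $2$ so that the coverage equalities and the two ``drop'' rules make the inequalities close --- to be the main obstacle.

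The quantitative guarantees then come from tracking each dropped constraint from the round it is removed until termination. A fairness or load constraint that is never dropped sits in the final LP, so the final (feasible, integral) solution satisfies it exactly. If reviewer $r$'s load constraints are dropped in some round, $r$ has at most two fractional assignments there and all its remaining assignments are fixed; its load lies in $[\loadlb_r,\loadub_r]$ at that moment, and the at most two fractional coordinates each round to $\{0,1\}$ afterwards, so the final load differs from an integer point of $[\loadlb_r,\loadub_r]$ by strictly less than $2$, which --- since $\loadlb_r,\loadub_r$ are integral --- places it within $1$ of the interval. If paper $p$'s fairness constraint is dropped in some round, $p$ has $d\le 3$ fractional assignments, \papscore at least $T$, and crucially its coverage equality is \emph{retained}; so the $d$ fractional values (each in $(0,1)$, summing to an integer $q$ with $1\le q\le d-1$) are eventually rounded so that exactly $q$ of them become $1$. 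The total fractional weight ``rounded away'' is then the sum of the unselected fractional values, which is strictly less than $1$ in all cases $(d,q)\in\{(2,1),(3,1),(3,2)\}$; hence the \papscore drops by strictly less than $A_{max}$, so the constraint is violated by at most $A_{max}$.

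For the objective, note the optimum of the relaxation of $\Pcal$ upper-bounds the integer optimum, and the sequence of LP optima across rounds is non-decreasing: fixing variables to their current optimal values keeps the current solution feasible, and deleting constraints only enlarges the feasible region. The returned integral assignment is optimal for the final LP, so its objective is at least that of the relaxation of $\Pcal$, i.e. the global objective is maximized. I would organize the write-up as: (i) vertex/rank setup and feasibility of every LP; (ii) the counting progress lemma; (iii) termination and integrality; (iv) the per-constraint rounding bounds; (v) objective monotonicity.
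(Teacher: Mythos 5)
Your proposal is correct and follows essentially the same route as the paper: a rank/counting argument at a basic feasible solution showing that each round either rounds a variable or drops a constraint (hence termination with an integral solution), per-constraint bounds on the damage incurred at the moment a fairness or load constraint is dropped, and monotonicity of the LP optimum across rounds for the objective guarantee. Your bookkeeping is in fact slightly more careful than the paper's in two places --- you restrict the count to the fractional support rather than to all reviewers and papers, and you derive the $A_{max}$ bound from the retained coverage equality (exactly $q$ of the $d\le 3$ fractional variables round to $1$, so the lost fractional mass is $<1$) instead of the paper's explicit case analysis --- but the underlying argument is the same.
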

The local fairness formulation,  $\Pcal$, is comprised of a set of
reviewers, $R$, a set of papers, $P$, reviewer load lower and upper bounds, $\loadlb$ and $\loadub$, respectively, coverage constraints, $\covcstr$, a paper-reviewer affinity matrix, $A$, and a local fairness threshold, $T$.
To prove this theorem we rely on three lemmas.
The first guarantees that \iralg does not violate a load constraint by more than 1; the second guarantees that \iralg will never violate a local fairness constraint by more than $A_{max}$; the third guarantees that \iralg will always terminate if the input problem is feasible.

\begin{lemma}
  Given a feasible instance of the local fairness formulation, \iralg
  never violates a load constraint by more than 1.
  \label{thm:load}
\end{lemma}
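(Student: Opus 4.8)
The plan is to argue by an iterative-rounding / dimension-counting argument, in the style of Lau–Ravi–Singh and the classic max-min allocation rounding. The key observation is that \iralg only drops a load constraint for a reviewer $r$ when $r$ has at most $2$ fractionally-assigned papers remaining, and once that constraint is dropped it is never reinstated. So I would fix an arbitrary reviewer $r$ and track the quantity $\sum_j x_{rj}$ over the rounds of the algorithm. As long as the load constraints for $r$ are still active, $\loadlb_r \le \sum_j x_{rj} \le \loadub_r$ holds in the current LP solution. The only way $r$'s final integral load can deviate from $[\loadlb_r,\loadub_r]$ is if at some round the algorithm executes \texttt{drop-loads}$(r)$; I want to show that at that moment $r$ has so few fractional variables left that any subsequent rounding can change $r$'s load by at most $1$.

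First I would record the precondition of \texttt{drop-loads}$(r)$: it fires only when $r$ has at most $2$ fractional assignments $x_{rj_1}, x_{rj_2} \in (0,1)$, and at that point the integral part of $r$'s load, call it $\ell_r := \sum_{j : x_{rj} \in \{0,1\}} x_{rj}$, together with the two fractional values satisfies $\loadlb_r \le \ell_r + x_{rj_1} + x_{rj_2} \le \loadub_r$. Since the two fractional values lie strictly between $0$ and $1$, their sum lies strictly between $0$ and $2$, so $\ell_r$ itself lies in $(\loadub_r - 2, \loadlb_r)$... more carefully, $\ell_r > \loadub_r - 2$ and $\ell_r < \loadlb_r + $ (the fractional sum is $< 2$ so this is fine) — the upshot I want is $\loadub_r - 2 < \ell_r$ and $\ell_r + 2 > \loadlb_r$, i.e. $\ell_r \in \{\loadlb_r - 1, \loadlb_r, \dots, \loadub_r - 1\}$ after accounting for integrality. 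Then, no matter how the remaining two fractional variables are eventually rounded (to $0$ or $1$ each), the final load of $r$ is $\ell_r + \delta$ with $\delta \in \{0,1,2\}$, and I must check this stays within $[\loadlb_r - 1, \loadub_r + 1]$. The edge cases are $\delta = 0$ (final load $\ge \loadub_r - 1 > \loadlb_r - 2$, fine) and $\delta = 2$ (final load $\le \loadlb_r + 1 < \loadub_r + 2$, fine) — but the genuinely tight direction needs the inequality $\ell_r \le \loadub_r - 1$ which does require being slightly careful that the strict bound $\ell_r + x_{rj_1} + x_{rj_2} \le \loadub_r$ with $x_{rj_1}+x_{rj_2}>0$ forces $\ell_r \le \loadub_r - 1$ by integrality. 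A symmetric argument handles the lower bound. If \texttt{drop-loads}$(r)$ never fires, $r$'s load constraints are active until $r$ has no fractional variables, so its final load is exactly in $[\loadlb_r,\loadub_r]$ and the claim is trivial.

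There is one subtlety I would make sure to address: after \texttt{drop-loads}$(r)$ is invoked, the fractional variables $x_{rj_1}, x_{rj_2}$ are not immediately rounded — they persist across possibly many further rounds, during which they may take new fractional values, and new rounds re-solve the LP with $r$'s load constraint absent. I need that the \emph{number} of fractional $r$-variables can only stay the same or decrease: in each round any variable that becomes integral is permanently fixed, and no fixed variable ever becomes fractional again, so once $r$ is down to $2$ fractional variables it stays at $\le 2$ forever. Hence whatever the final integral values are, they differ from the values at the drop-moment by a total of at most $2$ in $\pm$ direction spread over two variables, and combined with the bound on $\ell_r$ above, $r$'s final integral load lies in $[\loadlb_r - 1, \loadub_r + 1]$.

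The main obstacle, I expect, is not any single inequality but rather being careful about \emph{when} the load constraints are guaranteed to be present in the LP and exactly what invariant the LP solution satisfies at the drop-moment — i.e. nailing down that the LP is always feasible at each round (which follows from Fact \ref{fact:frac} and the fact that dropping constraints preserves feasibility) so that the bounds $\loadlb_r \le \sum_j x_{rj} \le \loadub_r$ really do hold right up to the drop. The arithmetic with $\ell_r$, the two fractional variables, and integrality is then routine once that invariant is pinned down.
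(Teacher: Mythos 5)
Your proposal is correct and follows essentially the same argument as the paper's proof: at the moment \texttt{drop-loads}$(r)$ fires, the integral part $\alpha$ of $r$'s load satisfies $\loadlb_r - 1 \le \alpha \le \loadub_r - 1$ by integrality (since the two fractional variables sum to a value strictly between $0$ and $2$), so any eventual rounding of those two variables lands the final load in $[\loadlb_r - 1, \loadub_r + 1]$. Your extra care about the fractional variables persisting across later rounds is a reasonable addition the paper glosses over, though watch the garbled intermediate inequalities (e.g.\ ``final load $\ge \loadub_r - 1$'' in the $\delta=0$ case should read $\ge \loadlb_r - 1$) --- your final conclusion is nonetheless right.
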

\begin{proof}
  \label{proof:load}
  \iralg only drops load constraints if a reviewer is assigned fractionally to at most 2 papers. Clearly, if a
  reviewer is assigned to exactly one paper, the load constraint can be violated by at most one. Therefore,
  let $r_i$ be a reviewer, assigned fractionally to $p_j$ and $p_k$ only.
  Then,
  \begin{align*}
    \loadlb_i \le x_{ij} + x_{ik} + \alpha \le \loadub_i.
  \end{align*}
  where $\alpha$ is the total load on $r_i$ excluding $x_{ij}$ and $x_{ik}$.
  Since $r_i$ is only fractionally assigned to 2 papers, $\alpha$ must be integer; since $x_{ij}, x_{ik} \in (0, 1)$, $x_{ij} + x_{ik} < 2$.
  Thus,
  \begin{align*}
    \loadlb_i - 1 \le \alpha \le \loadub_i - 1.
  \end{align*}
  If the load constraints are dropped and $r_i$ is neither assigned to $p_j$ nor $p_k$, then $r_i$ will retain a load of $\alpha$, which is at least as large as 1 less than $\loadlb_i$.
  On the other hand, if $r_i$ is assigned to both $p_j$ and $p_k$, then $r_i$ will exhibit a load of $\alpha + 2 \le \loadub_i + 1$.
\end{proof}

\begin{lemma}
  Given a feasible instance of the local fairness formulation, \iralg
  never violates a local fairness constraint by more than $A_{max}$.
  \label{thm:violation}
\end{lemma}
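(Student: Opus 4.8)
The plan is to concentrate on a single paper $p_j$ and on the round of \iralg in which its local fairness constraint is dropped. If that constraint is never dropped, then it survives into the final linear program, whose returned solution is integral and feasible, so $\sum_i x_{ij}A_{ij}\ge T$ holds exactly and there is nothing to prove. Otherwise let round $t$ be the round in which \iralg drops the constraint of $p_j$, and let $x^{(t)}$ be the LP solution computed in that round. Partition the reviewers into $I=\{i: x^{(t)}_{ij}=1\}$, $Z=\{i: x^{(t)}_{ij}=0\}$ and $F=\{i: 0<x^{(t)}_{ij}<1\}$; by the rule for dropping a fairness constraint, $|F|\le 3$. First I would record two facts. (i) Because the constraint of $p_j$ is still present when $x^{(t)}$ is computed, $\sum_{i\in I}A_{ij}+\sum_{i\in F}x^{(t)}_{ij}A_{ij}\ge T$. (ii) The coverage constraint $\sum_i x^{(t)}_{ij}=\covcstr_j$ is never dropped by \iralg, so $m:=\sum_{i\in F}x^{(t)}_{ij}=\covcstr_j-|I|$ is a positive integer; since every coordinate indexed by $F$ lies strictly between $0$ and $1$, this forces $1\le m\le |F|-1$, hence $m\in\{1,2\}$ and $|F|\in\{2,3\}$.

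Next I would track what can happen to $p_j$ after round $t$. Once a variable becomes integral it is fixed for all remaining rounds, so the only variables $x_{ij}$ (for this fixed $j$) that can still change after round $t$ are those with $i\in F$; moreover the coverage constraint keeps forcing exactly $\covcstr_j$ reviewers onto $p_j$, so in the returned integral matching the reviewers of $I$ are still assigned to $p_j$, those of $Z$ are not, and exactly $m$ reviewers of $F$ --- say the set $F^\star\subseteq F$ with $|F^\star|=m$ --- are assigned. Writing $s_j$ for the final paper score of $p_j$, we get $s_j=\sum_{i\in I}A_{ij}+\sum_{i\in F^\star}A_{ij}\ge\sum_{i\in I}A_{ij}+\min_{S\subseteq F,\,|S|=m}\sum_{i\in S}A_{ij}$. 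The step that links this back to fact (i) is that the vector $(x^{(t)}_{ij})_{i\in F}$ lies in the polytope $\{y\in[0,1]^F:\sum_i y_i=m\}$, whose vertices are exactly the $0/1$ vectors with $m$ ones; hence $\sum_{i\in F}x^{(t)}_{ij}A_{ij}$ is a convex combination of the numbers $\{\sum_{i\in S}A_{ij}:S\subseteq F,\,|S|=m\}$, and in particular $\sum_{i\in F}x^{(t)}_{ij}A_{ij}\le\max_{S\subseteq F,\,|S|=m}\sum_{i\in S}A_{ij}$. Combining this with (i) gives $T\le\sum_{i\in I}A_{ij}+\max_{|S|=m}\sum_{i\in S}A_{ij}$, and subtracting the lower bound on $s_j$ yields $T-s_j\le\max_{|S|=m}\sum_{i\in S}A_{ij}-\min_{|S|=m}\sum_{i\in S}A_{ij}$.

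The final step is to bound this gap by $A_{max}$, and here $|F|\le 3$ is what matters. If $m=1$ the gap is $\max_{i\in F}A_{ij}-\min_{i\in F}A_{ij}\le A_{max}$; if $m=2$, so $|F|=3$, write $F=\{a,b,c\}$ with $A_{aj}\le A_{bj}\le A_{cj}$, and the gap equals $(A_{bj}+A_{cj})-(A_{aj}+A_{bj})=A_{cj}-A_{aj}\le A_{max}$ (both cases use that affinities are nonnegative; in general the gap is $A_{max}-A_{min}$). Hence $s_j\ge T-A_{max}$, which is exactly the asserted bound. I expect the main obstacle to be the bookkeeping around frozen variables --- arguing cleanly that after round $t$ the only part of $p_j$'s assignment still in play is indexed by $F$, and that the coverage constraint pins $|F^\star|=m$; once that is in place, the vertex characterization of the polytope sandwiches the fractional contribution between the worst and best integral completions and the rest is a short case check. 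It is worth remarking that the constant $3$ in the algorithm is exactly what keeps this gap at $A_{max}$ rather than $2A_{max}$: a fourth fractional reviewer with $m=2$ could let the two lowest-affinity reviewers differ from the two highest by as much as $2A_{max}$.
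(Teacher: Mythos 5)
Your proposal is correct and follows essentially the same route as the paper's proof: both exploit that the never-dropped coverage equality forces the fractional mass on the at-most-3 fractional reviewers to be an integer $m\in\{1,2\}$, and that replacing the fractional contribution by any integral completion of size $m$ changes the \papscore by at most $A_{max}$. Your max-minus-min bound over $m$-subsets of $F$ is just a cleaner, unified packaging of the paper's explicit case analysis (and you are right that both arguments implicitly use nonnegativity of the affinities).
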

\begin{proof}
\label{proof:violation}
\iralg only drops a paper's local fairness constraint if that paper has
at most 3 reviewers fractionally assigned to it. Clearly, if a paper has only one reviewer fractionally assigned to it, the local
fairness constraint can be violated by at most $A_{max}$.
Assume during an iteration of \iralg a paper has exactly 2 reviewers fractionally assigned to it.
Call that paper $p_k$ and those reviewers $r_i$ and $r_j$.
During each iteration of \iralg, a feasible solution to the relaxed local fairness formulation is computed.
Therefore,
\begin{align*}
C' + x_{ik} + x_{jk} = \covcstr_k,
\end{align*}
where $C'$ is load the on $p_k$ aside from the load contributed by reviewers $r_i$ and $r_j$.
Recall that $x_{ik}, x_{jk} \in (0, 1)$ and $r_i$ and $r_j$ are the only reviewers fractionally assigned to $p_k$.
Therefore $x_{ik} + x_{jk} = 1$.
Moreover,
\begin{align*}
    x_{ik}A_{ik} + x_{jk}A_{jk} \le x_{ik}A_{max} + x_{jk}A_{max} = A_{max}.
\end{align*}
Now, consider the \papscore at $p_k$, and let $T'$ be the total affinity between $p_k$ and all its assigned reviewers, except for $r_i$ and $r_j$.
Then,
\begin{align*}
  T' + x_{ik}A_{ik} + x_{jk}A_{jk} &\ge T\\
  T' &\ge T - x_{ik}A_{ik} - x_{jk}A_{jk}\\
  &\ge T - A_{max}.
\end{align*}
Since either $r_i$ or $r_j$ must be assigned integrally to $p_k$ (lest the coverage constraint be violated), dropping the local fairness constraint at $p_k$ can only lead to a violation of the local fairness constraint at $p_k$ by at most $A_{max}$.

Next, consider the case that $p_k$ has 3 reviewers fractionally assigned to it, $r_h$, $r_i$ and $r_j$.
Since the coverage constraint at $p_k$ must be met with equality, one of the two cases below must be true:
\begin{align*}
    x_{hk} + x_{ik} + x_{jk} = 1
\end{align*}
or
\begin{align*}
    x_{hk} + x_{ik} + x_{jk} = 2.
\end{align*}

As before, let $T'$ be the \papscore at $p_k$, excluding affinity contributed from fractionally assigned reviewers.
If the first case above is true, then $x_{hk}A_{hk} + x_{ik}A_{ik} + x_{jk}A_{jk} \le A_{max}$.
Furthermore,
\begin{align*}
  T' + x_{hk}A_{hk} + x_{ik}A_{ik} + x_{jk}A_{jk} &\ge T\\
  T' &\ge T - x_{hk}A_{hk} - x_{ik}A_{ik} - x_{jk}A_{jk}\\
  &\ge T - A_{max}.
\end{align*}
This means that even if all three reviewers were unassigned from $p_k$ (which would make satisfying the coverage constraint at $p_k$ impossible), the local fairness constraint would only be violated by at most $A_{max}$.
Now, consider case 2 above, where $x_{hk}A_{hk} + x_{ik}A_{ik} + x_{jk}A_{jk} \le 2A_{max}$.
In order to satisfy the coverage constraint at $p_k$, at least two of the three reviewers must be assigned integrally to $p_k$.
Without loss of generality, assume that
\begin{align*}
    A_{hk} = \max[A_{hk}, A_{ik}, A_{jk}] \le A_{max}.
\end{align*}
Even if $r_h$ is unassigned from $p_k$, the change in \papscore at $p_k$ is at most $A_{max}$ and the local fairness can be violated at most by $A_{max}$.
The same is also true if either $r_i$ or $r_j$ is unassigned from $p_k$.
\end{proof}

\begin{lemma}
  Given a feasible instance of the local fairness fromulation, \iralg
  always terminates.
  \label{thm:termination}
\end{lemma}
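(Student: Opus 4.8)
The plan is to exhibit a nonnegative integer potential that strictly decreases in every iteration of the \texttt{while} loop that does not already terminate the algorithm, so that only finitely many iterations are possible. Since any integer-feasible point is feasible for the relaxation, and since dropping constraints only enlarges the feasible region, \texttt{maximize}$(\Pcal')$ always succeeds; I will assume throughout that it returns a \emph{basic} optimal solution $x^{\star}$, i.e.\ a vertex of the current feasible polytope that attains the optimum (as returned by simplex-type solvers). Let $\Phi$ be the sum of $|X|$, the number of papers whose local fairness constraint is still active, and the number of reviewers whose load constraints are still active. In any iteration the fixing loop removes from $X$ exactly the coordinates that are integral in $x^{\star}$, leaving $X=F:=\{x_{ij}:x^{\star}_{ij}\in(0,1)\}$; if $F=\emptyset$ the loop exits, and otherwise I must show that some fairness or load constraint is dropped, which makes $\Phi$ drop by at least $1$. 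As $\Phi$ is a nonnegative integer, this yields termination.

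The heart of the argument — and the step I expect to require the most care — is the structural claim that \emph{if $x^{\star}$ is a vertex of the current relaxed program and $F\neq\emptyset$, then some paper with an active fairness constraint has at most $3$ fractionally-assigned reviewers, or some reviewer with active load constraints has at most $2$ fractionally-assigned papers}. I would prove this by a rank/counting argument. Fixing every coordinate outside $F$ at its integral value, $x^{\star}$ becomes a vertex of a polytope in $\mathbb{R}^{|F|}$ none of whose box constraints are tight on an $F$-coordinate, so the tight ``structural'' constraints (coverage equalities, load inequalities, fairness inequalities), restricted to this subspace, must have rank $|F|$. A paper contributes a nontrivial coverage constraint only if it has a fractional reviewer; call that set $P_1$. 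A reviewer contributes a load constraint only if its loads are still active and it has a fractional paper; call that set $R_1$, and note that a reviewer's upper- and lower-load rows share a coefficient vector, so together they add at most $1$ to the rank. A paper contributes a fairness constraint only if that constraint is still active and it has a fractional reviewer; call that set $P_f\subseteq P_1$. Hence $|F|\le |P_1|+|R_1|+|P_f|$.

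On the other hand $|F|$ is the number of fractional (reviewer, paper) incidences, counted from both sides. Integrality of the right-hand side of each coverage equality forces every paper in $P_1$ to have at least $2$ fractional reviewers; so if the claim failed, every paper in $P_f$ would have at least $4$, giving $|F|\ge 4|P_f|+2(|P_1|-|P_f|)=2|P_1|+2|P_f|$ and hence $|P_1|+|P_f|\le |R_1|$ by the rank bound. Symmetrically, if the claim failed every reviewer in $R_1$ would have at least $3$ fractional papers, giving $|F|\ge 3|R_1|$ and hence $2|R_1|\le |P_1|+|P_f|$. Combining, $2(|P_1|+|P_f|)\le 2|R_1|\le |P_1|+|P_f|$, so $|P_1|=|P_f|=0$ and therefore $|F|=0$, contradicting $F\neq\emptyset$. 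This proves the structural claim: whenever $F\neq\emptyset$, the guard in Algorithm~\ref{alg:iralg} fires and a constraint is dropped, so $\Phi$ strictly decreases, which is exactly what was needed. The delicate points are the ones that make the two inequalities collide — that the two load rows of a reviewer count only once toward the rank, that a paper (or reviewer) with a fractional neighbour has at least two of them, and that the constants $3$ and $2$ in the algorithm are exactly tight for this argument, so weakening either would break the contradiction.
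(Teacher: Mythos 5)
Your proof is correct and follows essentially the same route as the paper's: both rest on the fact that at a basic feasible (vertex) solution the number of fractional variables is bounded by the number of tight, linearly independent structural constraints (with a reviewer's upper- and lower-load rows counting once), and both derive a contradiction by counting fractional incidences from the paper side and the reviewer side under the assumption that no constraint can be dropped. Your execution is in fact tighter than the paper's --- you restrict the count to the papers and reviewers that actually carry fractional assignments ($P_1$, $R_1$) and to constraints not already dropped ($P_f$), and you use integrality of the coverage right-hand side to charge at least $2$ fractional incidences to each paper in $P_1 \setminus P_f$ --- whereas the paper's count is taken globally over all of $R$ and $P$ and glosses over previously dropped constraints.
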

The goal in proving Lemma~\ref{thm:termination} is to show that during each iteration of \iralg, either: a constraint is dropped or an integral solution is found.
Before proving Lemma ~\ref{thm:termination} recall that the solution,
$x^\star$, of a linear program is always a \emph{basic feasible solution},
i.e., it has $n$ linearly independent tight constraints. Formally,
\begin{corollary}
  \label{cor:active-constraints}
  If $x^\star$ is a basic feasible solution of linear program $\Pcal$, then
  the number of non-zero variables in $x^\star$ cannot be greater than the
  number of linearly independent active constraints in $\Pcal$.
\end{corollary}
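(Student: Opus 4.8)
The plan is to prove the corollary as a standard linear-algebra fact about basic feasible solutions, using nothing beyond the characterization of a basic feasible solution recalled just above. Write $n$ for the total number of variables $x_{ij}$ appearing in $\Pcal$, so $x^\star \in \mathbb{R}^n$. By definition, $x^\star$ is the unique point satisfying some collection of $n$ linearly independent active constraints; equivalently, the matrix $M$ whose rows are all constraints tight at $x^\star$ has full column rank $n$. The entire argument is a rank count on $M$.

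First I would partition the variables according to their value at $x^\star$. Let $Z = \{(i,j) : x^\star_{ij} = 0\}$ be the set of zero coordinates and let $S$ be its complement, the support of $x^\star$, so that the number of non-zero variables is exactly $|S| = n - |Z|$. For each $(i,j) \in Z$ the non-negativity constraint $x_{ij} \ge 0$ is tight, contributing the active equation $x_{ij} = 0$, whereas for $(i,j) \in S$ no such constraint is tight. These $|Z|$ coordinate constraints are exactly the standard-basis row vectors $\{e_{ij} : (i,j) \in Z\}$, hence linearly independent, so they span a subspace of dimension precisely $|Z|$.

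The heart of the argument is a single rank inequality. Split the rows of $M$ into $M_Z$, the $|Z|$ coordinate rows $\{e_{ij} : (i,j) \in Z\}$ just identified, and $M_R$, the rows coming from the remaining active constraints (coverage, load, upper-bound, and fairness constraints). By subadditivity of rank under stacking rows, $n = \operatorname{rank}(M) \le \operatorname{rank}(M_Z) + \operatorname{rank}(M_R) = |Z| + \operatorname{rank}(M_R)$, whence $\operatorname{rank}(M_R) \ge n - |Z| = |S|$. Thus the number of non-zero variables, $|S|$, is at most $\operatorname{rank}(M_R)$, the number of linearly independent active constraints other than the forced non-negativity constraints, which is the assertion.

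The step I expect to require the most care is the rank bookkeeping: I must state precisely that identifying tight non-negativity constraints with zero coordinates is a bijection (every zero coordinate yields a tight $x_{ij}=0$ and no non-zero coordinate does), that the resulting coordinate rows are genuinely linearly independent, and that the subadditivity bound $\operatorname{rank}(M) \le \operatorname{rank}(M_Z) + \operatorname{rank}(M_R)$ is applied correctly. I would also note explicitly that the argument is unaffected by degeneracy, when strictly more than $n$ constraints are simultaneously tight, since $\operatorname{rank}(M_R)$ remains well defined and the inequality $\operatorname{rank}(M_R) \ge |S|$ still holds. The remaining details are routine.
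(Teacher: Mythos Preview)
The paper does not actually prove this corollary: it is stated as a recalled standard fact about basic feasible solutions, introduced with the phrase ``recall that \ldots'' and used without justification in the subsequent proof of Lemma~\ref{thm:termination}. Your argument is the standard linear-algebra proof the paper omits, and it is correct: the rank subadditivity bound $\operatorname{rank}(M) \le \operatorname{rank}(M_Z) + \operatorname{rank}(M_R)$ together with $\operatorname{rank}(M_Z)=|Z|$ and $\operatorname{rank}(M)=n$ immediately gives $|S| \le \operatorname{rank}(M_R)$.

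One small point worth tightening: when you list the rows of $M_R$ as ``coverage, load, upper-bound, and fairness constraints,'' be explicit about whether ``upper-bound'' means the reviewer-load bounds $\sum_j x_{ij}\le U_i$ or the box constraints $x_{ij}\le 1$. In the paper's application (the proof of Lemma~\ref{thm:termination}) only load, coverage, and fairness constraints are counted, and the relevant variables are the \emph{fractional} ones, i.e., those with $0<x_{ij}<1$; this works because integral variables have already been fixed and removed from the LP at the start of each round, so ``non-zero'' and ``fractional'' coincide there. Your argument goes through verbatim in that reduced LP, but spelling this out would make the connection to Lemma~\ref{thm:termination} airtight.
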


\begin{proof}
  \label{proof:termination}
  According to Algorithm\ref{alg:iralg}, \iralg drops constraints during any iteration in which it constructs a solution exhibiting at least one paper with at most 3 reviewers fractionally assigned to it or at least one reviewer assigned fractionally to at most 2 papers. If \iralg is able to drop a constraint or round a new variable to integral, it makes progress.
Therefore, \iralg could only fail to make progress if each reviewer was assigned fractionally to at least 3 papers and each paper was assigned fractionally to at least 4 reviewers.
  In the following, we show that this is impossible, using a particular invocation of Corollary~\ref{cor:active-constraints}.

  Assume for now that each reviewer is fractionally assigned to exactly 3 papers and each paper is assigned fractionally to exactly 4 reviewers.
  Therefore, the total number of fractional assignments can be written as follows:
  \begin{align*}
      \frac{1}{2}[3|R| + 4|P|].
  \end{align*}
  An instance of the local fairness paper matching problem contains an upper and lower bound constraint for each reviewer, 1 coverage constraint for each paper, and 1 local fairness constraint for each paper yielding $2|R| + 2|P|$
  total constraints.
  Note that for a reviewer $r$, only one of its load constraints (i.e., upper or lower) may be tight--assuming that the upper and lower bounds are distinct.
  Thus, an upper bound on the number of \emph{active} constraints is $|R| + 2|P|$.
  However, this means that the number of fractional variables is larger than the number of constraints:
    \begin{align*}
      \frac{1}{2}\left[3|R| + 4|P|\right] = \frac{3}{2}|R| + 2|P| > |R| + 2|P|
  \end{align*}
  which violates Corollary~\ref{cor:active-constraints}. When reviewers may be fractionally assigned to at least 3 papers and each paper is assigned fractionally to at least 4 reviewers, the number of nonzero fractional variables could only be larger. Note that, when there is no local fairness constraint \iralg returns an integral solution since the underlying constraint matrix becomes totally unimodular.

\end{proof}

Now to end the proof of the theorem, we note that the global objective value never decreases in subsequent rounds, as we always relax the formulation by dropping constraints and fix those integrality constraints for which $x_{i,j}$s have been returned as integer. Thus, \iralg maximizes the global objective.

\section{Proof of Fact~\ref{fact:decrease}}
\begin{proof} By definition, papers that are members of $P^{+}$ have paper score greater than $T$.
Therefore, unassigning a reviewer from a paper in $P^+$ may reduce the corresponding paper score by at most $A_{max}$ yielding a paper score of at least $T - A_{max}$, which makes the paper either a member of $P^{0}$ or $P^+$.
Now, consider the papers in $P^0$.
By step 7 above, a reviewer $r$ can only be unassigned from a paper $p \in P^0$ if the flow entering $p$ from $p'$ is large enough to make $p$'s resulting paper score at least as large as $T-A_{max}$.
Thus, the papers in $P^{0}$ either remain in $P^0$ or become members of $P^{+}$, which completes the proof.
\end{proof}

\end{appendix}

\end{document}